\newtheorem{theorem}{Theorem}
\newtheorem{proposition}{Proposition}
\newtheorem{corollary}{Corollary}
\newtheorem{lemma}{Lemma}
\newtheorem{claim}{Claim}
\theoremstyle{definition}
\newtheorem{defn}{Definition}
\newtheorem{exmp}{Example}
\newtheorem{assumption}{Assumption}
\newcommand{\G}{{\mathcal{G}}}
\begin{document}
\title{Aggregate Fluctuations in Networks with Drift-Diffusion Models Driven by Stable Non-Gaussian Disturbances}
%
%
%
\author{Christoforos Somarakis and Nader Motee
\thanks{C. Somarakis and N. Motee are with the Department
of Mechanical Engineering \& Mechanics, Lehigh University, Bethlehem, PA, 18015, USA e-mail: \{csomarak,motee\}@lehigh.edu}
}
%


\maketitle

\begin{abstract}
The focus of this paper is to quantify measures of aggregate fluctuations for a class of consensus-seeking multi-agent networks subject to exogenous noise with $\alpha$-stable distributions. This type of noise is generated by a class of random measures with heavy-tailed  probability distributions. We define a cumulative scale parameter using scale parameters of probability distributions of the output variables, as a measure of aggregate fluctuation. Although this class of measures can be characterized implicitly in  closed-form in steady-state, finding their explicit forms in terms of network parameters is, in general, almost impossible. We obtain several tractable upper bounds in terms of Laplacian spectrum and statistics of the input noise. Our results suggest that relying on Gaussian-based optimal design algorithms will result in non-optimal solutions for networks that are driven by non-Gaussian noise inputs with $\alpha$-stable distributions. 
\end{abstract}

\begin{IEEEkeywords}
Consensus, Heavy-tailed noise, Performance Measures, $\alpha$-stable processes, $p$-norm.
\end{IEEEkeywords}
\section{Introduction}

%
%
%
%
\IEEEPARstart{T}he level of complexity in modern real-world networks can make them vulnerable to enviromental or structural disturbances often with severe, if not catastrophic, consequences. Recent crises in various sectors of our society show specific frailties of dynamical networks due to weaknesses in their structures, e.g., the air traffic congestion problem \cite{craig88}, power outages  \cite{poweroutage03}, the financial crisis of 2008 (see Ch. 17-18 in \cite{fouque13}) and other major disruptions. 

Thus the problem of performance and robustness in high dimensional networks is pivotal in designing inter-dependent systems that withstand negative effects of disturbances. Application areas include, but are not limited to, co-operative control of multi-agent systems, collaborative autonomy, transportation networks, power networks, metabolic pathways and financial networks (see for example \cite{6248170,Tahbaz13} and references therein).

Standard models of uncertainty in dynamic processes assume underlying probability distributions with well-defined first and second moments. A particularly prominent example is this of white noise, where the underlying distribution is Gaussian. Its main advantage is classic theory of stochastic differential equations (SDE) \cite{arnold74}, that provides clean and tractable results. This gives rise to Engineers to leverage Gaussian-induced sources of perturbation on networked control systems and design optimal structures that mitigate undesirable noise-related effects \cite{Bamieh12,yaserecc16,Siami16TACa,somyasnader17}. 

Despite elegant formulation, systems perturbed by purely Gaussian perturbations have attracted considerable criticism. The primary dispute relies on the claim that Gaussian approximations fail to model real-world uncertainty that occasionally exhibits susceptibility to large and unexpected events \cite{taleb2010black,mandelbrot2010mis,schoutens2003levy}. 

\subsection{Shortcomings of Gaussian Assumptions}
Systems perturbed by white noise generate stochastic processes that fluctuate around the expected (unperturbed) value in an amorphous yet highly regularized manner. The resulting dynamics essentially preserve the Gaussian nature of perturbations, along with its light-tailed property. Thus, there is no reasonable possibility for abrupt and outlying values to emerge, in other words faithful signs of large and unexpecrted fluctuations, or shocks, in the network.  As explained in \cite{taleb2010black}, shock events are ubiquitous in real world situations. Furthermore, they are identified as such, if they lie outside the realm of regular expectations, carry an extreme impact and have likelihood of happening. It is precisely the light-tailed property of Gaussian measures that hinders realistic possibility of shocks. To gain a better understanding, the qualitative difference in a solution trajectory perturbed by light-tailed and a solution trajectory perturbed  by heavy-tailed noise, is illustrated in Figure \ref{fig: orbit}.  
   

\subsection{Related Literature \& Contributions}
Mathematical models driven by non-Gaussian and heavy-tailed disturbances have been proposed in various disciplines \cite{shoutens05,duan15}. To the best of our knowledge, control community lacks relevant studies and results, with the exception of \cite{7084617}.
 
In this paper, we develop the theoretical framework of heavy-tailed consensus seeking networks. These are types of drift-diffusion stochastic differential equations, driven by $\alpha$-stable noise. The drift (deterministic) part is selected to be an average consensus protocol. This is the standard control law for asymptotic agreement over agents and it enjoys lasting interest in problems of cooperative dynamics, formation control, distributed computation and optimization \cite{Mesbahi_Egerstedt_2010}. The diffusion part consists of additive $\alpha$-stable random measures that model noise as exogenous disturbance on the unperturbed (drift) dynamics. 

The objective of our work to lay the groundwork analysis of this class of systems. It is found that the systemic (i.e., network-wide) response towards this class of noise is quantitatively and qualitatively different when compared to gaussian-induced systemic fluctuations. Furthermore we highlight the perplexed interplays between network topology and noise as a means to understand the manner with which shocks are propagated through the network, affecting its ability to remain in equilibrium. In particular, we introduce a measure to quantify aggregate flcutuation for networks driven by $\alpha$-stable noise. We derive an implicit formulation of the metric for system outputs and we explore its basic properties. Moreover, we highlight its connection with the $\mathcal H_2$-based performance measure for linear systems with white-noise inputs \cite{Siami16TACa} as well as with other $p$-metrics. Explicit expression of the systemic performance metric appears to be generally impossible. Exceptions are communication topologies with uniform, all-to-all connectivity or purely gaussian noise perturbations. For this reason we obtain tractable bounds of the performance metric which we believe to be useful in network design problems. Numerical examples are discussed and validate our theoretical results. We suggest that $\mathcal H_2$-based design algorithms are not only incompatible for the case of heavy-tailed  disturbances, but they also deliver sub-optimal topologies. It is acknowledged that the present work is an outgrowth of \cite{som_acc2018}. This version considers more general (i.e. not necessarily symmetric) $\alpha$-stable random measures, and it includes detailed proofs of technical results.

\begin{figure}\center
\includegraphics[scale=0.63]{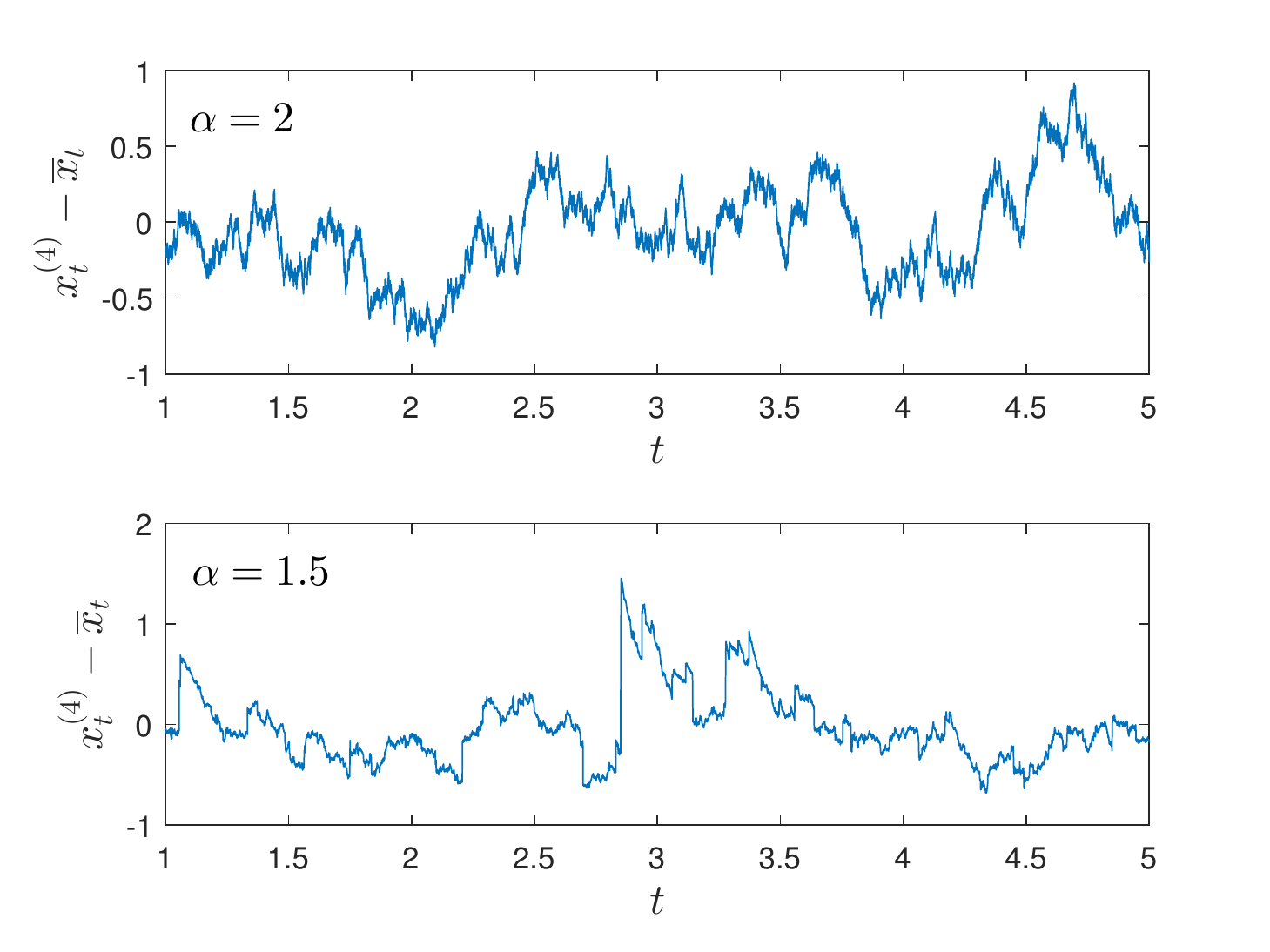}
\caption{Simulation of output dynamics of system \eqref{eq: model} for $n=7$ agents, in the face of white ($\alpha=2$) and heavy-tailed ($\alpha<2$) noise inputs. The latter type of perturbations results in dynamics with jumps that represent, more realistically, the effect of shocks on the nominal process.} \label{fig: orbit}
\end{figure}

\section{Preliminaries}\label{section: prelim}
\noindent 
By $\mathbb R^n$ we denote the $n$-dimensional Euclidean space, with elements $x=\big[x^{(1)},\dots,x^{(n)}\big]^T \in \mathbb R^n$. For any $x\in \mathbb R^n$, $\|x\|_p:=\sqrt[p]{\sum_{j}|x^{(j)}|^p}$, for $p>0$. The fundamental property on the equivalence of norms in $\mathbb R^n$:
\begin{equation*}
\| x\|_p \leq \| x\|_r \leq n^{\frac{1}{r}-\frac{1}{p}}\|x \|_p~~~\text{for}~p>r>0.
\end{equation*}

\noindent Given a probability space $(\Omega,\mathcal F, \mathbb P)$ we say that a random variable $z(\omega): \Omega\rightarrow \mathbb R$ follows a stable distribution, and we write $$z\sim S_{\alpha}(\sigma,\beta,\mu),$$ if there exist parameters $0<\alpha\leq 2$, $\sigma\geq 0$,  $-1\leq \beta \leq 1$ and $\mu \in \mathbb R$, such that its characteristic function is of the form:
\begin{equation*}
 \phi_z(\theta)=\mathbb E\big[e^{i\theta z}\big]=\text{exp}\big\{\sigma^\alpha\big (-|\theta|^\alpha+i\theta \omega(\theta,\alpha,\beta)\big)+i\mu \theta \big\}
\end{equation*} where $\omega(\theta,\alpha,\beta)$ stands for the function $$\omega(\theta,\alpha,\beta)=\begin{cases}
\beta |\theta|^{\alpha-1} \tan\frac{\pi \alpha}{2}, ~ \hspace{0.2in} \alpha\neq 1\\
-\beta \frac{2}{\pi} \ln |\theta| ,  ~ \hspace{0.46in} \alpha = 1.
\end{cases} $$
The parameter $\alpha$ is called the stability index of the distribution. Parameter $\alpha$ basically characterizes the impulsiveness (i.e. frequency and magnitude) of the shocks.  The parameter $\sigma$ is the scale of the distribution and it is closely related to the standard deviation: The larger the scale parameter is, the more spread out the distribution becomes. Parameter $\beta$ is the skeweness of the distribution, an indicator of asymmetry. Finally, $\mu$ is the shift of the distribution and it plays the role of the mean value\footnote{see Property 1.2.19 in \cite{samorodnitsky1994stable}.}. The following results summarize basic properties of stable random variables. They are drawn from \cite{samorodnitsky1994stable} and stated below as Propositions \ref{prop: propstrv}, \ref{prop: stableindep} and \ref{prop: stableintegralprop} to enhance readability and keep our manuscript self-contained.
 
%

\begin{proposition}\label{prop: propstrv} Let $z\sim S_{\alpha}(\sigma,\beta,\mu)$. It holds that:

\vspace{0.1in}
\noindent{1.} For any $a\in \mathbb R$, $z+a\sim S_{\alpha}(\sigma,\beta,\mu+a)$.

\vspace{0.1in}
\noindent{2.} For any $a\neq 0$ \begin{equation*}az\sim \begin{cases}S_{\alpha}\big(|a|\sigma,\text{sgn}(a)\beta,a\mu\big), & \alpha\neq 1 \\
 S_{\alpha}\big(|a|\sigma,\text{sgn}(a)\beta,a\mu-2a\ln(|a|)\sigma \beta\big), & \alpha=1
 \end{cases}
\end{equation*}

\vspace{0.1in}
\noindent{3.} If $\alpha<2$
\begin{equation*}
\mathbb E\big[|z|^p\big]\begin{cases}
<\infty, &~\text{for}~0<p<\alpha\\
=\infty, &~\text{for}~p\geq \alpha\\
\end{cases}
\end{equation*} In addition, if $\mu=0$, and $\beta=0$ only if $\alpha=1$, it holds   
$$\mathbb E[|z|^p]=c^p\sigma^p,$$ where $c=c(\alpha,\beta,p)=\big(\mathbb E[|z_0|^{p}]\big)^{\frac{1}{p}}$ for $z_0\sim S_{\alpha}(1,\beta,0)$. 
\end{proposition}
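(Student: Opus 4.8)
The plan is to establish all three items by elementary manipulations of the characteristic function $\phi_z$, in each case invoking the uniqueness of characteristic functions to identify the resulting law. For the shift property in item~1, I would write $\phi_{z+a}(\theta) = \mathbb{E}[e^{i\theta(z+a)}] = e^{ia\theta}\phi_z(\theta)$ and observe that multiplying by $e^{ia\theta}$ only augments the shift term $i\mu\theta$ into $i(\mu+a)\theta$, leaving the factor $\sigma^\alpha(-|\theta|^\alpha + i\theta\omega)$ untouched; hence $z+a \sim S_\alpha(\sigma,\beta,\mu+a)$.

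For the scaling property in item~2 the starting point is $\phi_{az}(\theta) = \phi_z(a\theta)$. The key algebraic identities are $|a\theta|^\alpha = |a|^\alpha|\theta|^\alpha$ and, for $\alpha \neq 1$, $(a\theta)|a\theta|^{\alpha-1} = \text{sgn}(a)\,\text{sgn}(\theta)\,|a|^\alpha|\theta|^\alpha$, which lets me rewrite the exponent of $\phi_z(a\theta)$ so that the scale becomes $|a|\sigma$, the skewness becomes $\text{sgn}(a)\beta$ (the sign of $a$ being absorbed into the skewness parameter), and the shift becomes $a\mu$. The case $\alpha = 1$ is the most delicate step: here $\omega(a\theta,1,\beta) = -\beta\frac{2}{\pi}\ln|a\theta|$, and the splitting $\ln|a\theta| = \ln|a| + \ln|\theta|$ produces an extra contribution that is linear in $\theta$ and therefore cannot be absorbed into the $\omega$ term; it must instead be collected into the shift, which is what generates the logarithmic correction to $a\mu$. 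I expect this bookkeeping for $\alpha = 1$ to be where the care is needed, while in both cases uniqueness of characteristic functions finishes the identification.

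For the moment dichotomy in item~3, I would rely on the power-law tail behaviour of non-Gaussian stable laws, $\mathbb{P}(|z| > x) \sim C\,x^{-\alpha}$ as $x \to \infty$, a standard property of $\alpha$-stable distributions with $\alpha < 2$ that I would cite from \cite{samorodnitsky1994stable}. Combining this with the layer-cake representation $\mathbb{E}[|z|^p] = \int_0^\infty p\,x^{p-1}\,\mathbb{P}(|z| > x)\,dx$ reduces the question to the convergence of $\int_1^\infty x^{p-1-\alpha}\,dx$, which is finite precisely when $p < \alpha$; the contribution near the origin is always finite for $p > 0$ since the integrand is bounded by $p\,x^{p-1}$. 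This yields $\mathbb{E}[|z|^p] < \infty$ for $0 < p < \alpha$ and $\mathbb{E}[|z|^p] = \infty$ for $p \geq \alpha$.

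Finally, the closed form $\mathbb{E}[|z|^p] = c^p\sigma^p$ follows cleanly from item~2. Under the hypotheses $\mu = 0$, with $\beta = 0$ whenever $\alpha = 1$, applying the scaling property with $a = \sigma > 0$ to $z_0 \sim S_\alpha(1,\beta,0)$ shows that $\sigma z_0$ carries exactly the parameters of $z$ (the logarithmic correction in the $\alpha = 1$ case vanishes precisely because $\beta = 0$ there), so $z \stackrel{d}{=} \sigma z_0$. Homogeneity of the absolute $p$-th power then gives $\mathbb{E}[|z|^p] = \sigma^p\,\mathbb{E}[|z_0|^p] = c^p\sigma^p$ with $c = \big(\mathbb{E}[|z_0|^p]\big)^{1/p}$, which is finite for $0 < p < \alpha$ by the first part of item~3.
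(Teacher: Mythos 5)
Your proof is correct, but note that the paper does not prove this proposition at all: it is stated as background, imported verbatim (up to notation) from \cite{samorodnitsky1994stable}, so there is no ``paper proof'' to compare against. What you have written is essentially the standard textbook argument that the citation points to (Properties 1.2.1--1.2.3 and 1.2.15--1.2.17 of that reference): characteristic-function bookkeeping and uniqueness for items 1--2, tail asymptotics plus the layer-cake formula for the moment dichotomy in item 3, and the self-similarity identity $z \stackrel{d}{=} \sigma z_0$ for the closed form $\mathbb E[|z|^p]=c^p\sigma^p$. Two small remarks. First, your item 3 is not fully self-contained either: the tail estimate $\mathbb P(|z|>x)\sim C x^{-\alpha}$ is itself a nontrivial cited fact, and for the divergence claim when $p\geq\alpha$ you should note that the constant is strictly positive for \emph{every} $\beta\in[-1,1]$ --- for $\beta=\pm 1$ one of the one-sided tails decays faster than any power, but the two one-sided tail constants sum to $C_\alpha\sigma^\alpha>0$, so $\mathbb P(|z|>x)$ still has an exact power tail (this also tacitly assumes $\sigma>0$; the degenerate case $\sigma=0$ has all moments finite). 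Second, your $\alpha=1$ computation in item 2 actually produces the shift $a\mu-\tfrac{2}{\pi}a\ln(|a|)\sigma\beta$, with a factor $\tfrac{1}{\pi}$ that is absent from the statement as printed in the paper; your derivation agrees with \cite{samorodnitsky1994stable}, so the discrepancy is a typo in the paper's statement, and it is consistent with the paper's own requirement that $\beta=0$ when $\alpha=1$ in the last part of item 3 (precisely so that this logarithmic correction vanishes), exactly as you observed.
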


\vspace{0.1in}
\noindent A closed form expression of constant $c$ is reported in \cite{samorodnitsky1994stable}. Its most remarkable property is its limit at $\alpha$:  
\begin{equation*}
\lim_{p\rightarrow \alpha^-}c(\alpha,\beta,p)=\begin{cases}
+\infty, & \alpha <2\\
\sqrt{2}, & \alpha=2
\end{cases} \hspace{0.2in}~\text{for all}~\beta \in [-1,1].
\end{equation*}

\begin{proposition}\label{prop: stableindep}
Let $z_i\sim S_{\alpha}(\sigma_i,\beta_i,\mu_i),~i=1,2$ be independent. Then $$z_1+z_2\sim S_{\alpha}\bigg((\sigma_1^\alpha+\sigma_2^\alpha\big)^{\frac{1}{\alpha}},\frac{\beta_1\sigma_1^\alpha+\beta_2\sigma_2^\alpha}{\sigma_1^\alpha+\sigma_2^\alpha},\mu_1+\mu_2\bigg).$$
\end{proposition}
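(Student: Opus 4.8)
The plan is to work entirely at the level of characteristic functions, exploiting the elementary fact that for independent random variables the characteristic function of the sum factorizes as a product. Since $z_1$ and $z_2$ are independent, $\phi_{z_1+z_2}(\theta)=\phi_{z_1}(\theta)\,\phi_{z_2}(\theta)$, so it suffices to multiply the two exponentials supplied by the defining characteristic function, collect the exponents, and then read off the parameters of the resulting expression, verifying that it again has the canonical stable form $\exp\{\sigma^\alpha(-|\theta|^\alpha+i\theta\,\omega(\theta,\alpha,\beta))+i\mu\theta\}$. Because $\omega$ is defined piecewise in $\alpha$, I would treat the cases $\alpha\neq 1$ and $\alpha=1$ separately.

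For $\alpha\neq 1$, substituting $\omega(\theta,\alpha,\beta_i)=\beta_i|\theta|^{\alpha-1}\tan\frac{\pi\alpha}{2}$ and adding the two exponents yields a single exponential whose $-|\theta|^\alpha$ coefficient is $\sigma_1^\alpha+\sigma_2^\alpha$, whose skewness term carries the factor $\sigma_1^\alpha\beta_1+\sigma_2^\alpha\beta_2$, and whose term linear in $\theta$ is $\mu_1+\mu_2$. Matching these against the target form forces $\sigma^\alpha=\sigma_1^\alpha+\sigma_2^\alpha$, hence $\sigma=(\sigma_1^\alpha+\sigma_2^\alpha)^{1/\alpha}$; it forces $\sigma^\alpha\beta=\sigma_1^\alpha\beta_1+\sigma_2^\alpha\beta_2$, hence $\beta=(\beta_1\sigma_1^\alpha+\beta_2\sigma_2^\alpha)/(\sigma_1^\alpha+\sigma_2^\alpha)$; and it gives $\mu=\mu_1+\mu_2$. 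These are precisely the claimed parameters.

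For $\alpha=1$ the same bookkeeping applies, but now $\omega(\theta,1,\beta_i)=-\beta_i\frac{2}{\pi}\ln|\theta|$ and $\sigma_i^\alpha=\sigma_i$. The point to verify is that the logarithmic terms combine additively with the correct $\sigma_i$ weights, so that the aggregated skewness coefficient is again $\sigma_1\beta_1+\sigma_2\beta_2$, reproducing $\beta=(\beta_1\sigma_1+\beta_2\sigma_2)/(\sigma_1+\sigma_2)$ and $\sigma=\sigma_1+\sigma_2$. I expect this case to demand the most care, since the nonlinear $\ln|\theta|$ dependence can obscure the factorization unless the $\sigma_i$ weights are tracked correctly; otherwise the computation is routine.

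Finally, I would confirm that the candidate parameters define a legitimate stable distribution, i.e. that $\beta\in[-1,1]$. This is immediate, because the proposed $\beta$ is a convex combination of $\beta_1,\beta_2\in[-1,1]$ with nonnegative weights $\sigma_i^\alpha/(\sigma_1^\alpha+\sigma_2^\alpha)$ summing to one. By uniqueness of the characteristic function, the matched form determines the law of $z_1+z_2$, which completes the argument.
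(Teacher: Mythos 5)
Your proof is correct: multiplying the two characteristic functions, collecting the $-|\theta|^\alpha$, skewness, and linear-in-$\theta$ terms (separately for $\alpha\neq 1$ and $\alpha=1$), and invoking uniqueness of characteristic functions is a complete argument, and the convexity check on $\beta$ is a nice touch. The paper itself gives no proof --- it quotes this proposition from the book of Samorodnitsky and Taqqu \cite{samorodnitsky1994stable}, where the standard proof is exactly this characteristic-function computation --- so your approach is essentially the canonical one.
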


\vspace{0.1in}
\noindent The random variable $z\sim S_{\alpha}(\sigma,0,0)$ is called symmetric $\alpha$-stable, for which we write $z\sim S\alpha S$. Its characteristic function takes the form $$\phi_z(\theta)=e^{-\sigma^\alpha|\theta|^{\alpha}}.$$ 

\noindent A finite collection of $\alpha$-stable random variables $z_i\sim S_{\alpha}(\sigma_i,\beta_i,\mu_i),~i=1,\dots, d$ can form an $\alpha$-stable vector $z=\big[z^{(1)},\dots, z^{(d)}\big]^T$.

\noindent A scalar-valued stochastic process $\{z_t,~t\in [0,\infty]\}$ is stable if all its finite dimensional distributions are stable.  A nominal example is this of $\alpha$-stable L\'evy process $z=\{z_t,~t\geq 0\}$ with the properties:

\vspace{0.1in}
\noindent[1.] $z(0)=0$ a.s.

\vspace{0.1in}
\noindent[2.] $z$ attains independent increments

\vspace{0.1in}
\noindent[3.] $z_t-z_s\sim S_{\alpha}\big( (t-s)^{1/\alpha},\beta,0\big)$, for $0\leq s<t<\infty$.

\noindent A vector valued $\alpha$-stable random process $z=\{z_t\}_t$ with $z_t=\big[z_t^{(1)},\dots, z_t^{(d)}\big]^T$, $t\geq 0$  is a family of $\alpha$-stable vectors parametrized by $t$.

\vspace{0.1in}
\noindent\textit{Stable Integrals}. The building blocks of stable integrals are random measures.  Let $(\Omega, \mathcal F, P)$ be a probability space and $L^0(\Omega)$ the set of all real random variables defined on it. Let also $(B, \mathcal B , m)$ be a measure space. Take $\beta: B\rightarrow [-1,1]$ a measurable function and $\mathcal B_{0}\subset \mathcal B$ that contains sets of finite $m$-measure. 
\begin{defn}A set function $M: \mathcal B_0\rightarrow L^{0}(\Omega)$ is a random measure, if it satisfies the following properties:

\vspace{0.1in}
\noindent I. It is independently scattered, i.e. for any finite collection of disjoint sets $A_1,\dots,A_k \in \mathcal B_0$, the random variables $M(A_1),\dots,M(A_k)$ are independent. 

\vspace{0.1in}
\noindent II. It is $\sigma$-additive on $\mathcal B_0$. 


\vspace{0.1in}
\noindent III. For every $A\in \mathcal B_0$, $$M(A)\sim S_\alpha\bigg((m(A))^{1/\alpha},\frac{\int_A \beta(y)\, m(dy)}{m(A)},0\bigg)$$

\end{defn} 
\noindent 
The next example establishes an intimate connection between random measures and stable processes.
\begin{exmp}\label{exmpl: ex1}
Let $M$ be an $\alpha$-stable random measure on $\big( [0,\infty), \mathcal B \big)$ with $m(dx)=\frac{1}{\alpha}dx$ and constant skewness density $\beta$, $0\leq x < \infty$. The process $Z=\{ Z_t, t\geq 0\}$ defined through $Z_t=M([0,t]),~0\leq t<\infty$ is an $\alpha$-stable L\'evy motion.
\end{exmp}
\noindent The stable integral defined as $$I(f):=\int_{B}f(y)M(dy)$$ are taken over integrands that are members of \begin{equation}\label{eq: spacef}\begin{split} 
F_\alpha =\bigg\{ f\in \mathcal B: \int_B |f(y)|^{\alpha}\,m(dy)<\infty\bigg\}.
\end{split}
\end{equation}
\begin{proposition}\label{prop: stableintegralprop}The integral $I(f)$ attains the properties:

\vspace{0.1in}
\noindent{1.} $I(f)\sim S_{\alpha}(\sigma_f,\beta_f,\mu_f)$  with 
$ \sigma_f^{\alpha}=\int_B |f(x)|^\alpha\, m(dx)$, \\ $\beta_f=\frac{\beta}{\sigma_f}\int_B f(x)^{<\alpha>}\,m(dx)$, \text{and}  
$$\mu_f=\begin{cases} 0, & \alpha \neq 1 \\
-\frac{2}{\pi}\beta\int_{B}f(x)\ln |f(x)|\,m(dx), & \alpha =1.
\end{cases}$$
The notation $q^{<\alpha>}$ stands for \begin{equation*}
q^{<\alpha>}=\begin{cases}
|q|^\alpha &~\text{if}~q>0, \\
-|q|^{\alpha} & ~\text{if}~q<0.
\end{cases}
\end{equation*} 
\noindent{2.} $I(a_1f_1+a_2f_2)=a_1I(f_1)+a_2I(f_2)$,
for any $f_1,~f_2\in F_\alpha$, and constants  $a_1,~a_2\in \mathbb R$.
%
\end{proposition}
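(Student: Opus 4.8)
\emph{Proof strategy.} The plan is to establish both properties first for elementary integrands---where the two preceding propositions do all the work---and then to pass to a general $f\in F_\alpha$ by approximation in the $L^\alpha(m)$ metric, exactly as one builds the Lebesgue integral. For a simple function $f=\sum_{j=1}^{k}a_j\mathbbm{1}_{A_j}$ with pairwise disjoint $A_j\in\mathcal B_0$, the definition gives $I(f)=\sum_{j=1}^{k}a_jM(A_j)$. Property~I of the random measure makes $M(A_1),\dots,M(A_k)$ independent, while property~III gives $M(A_j)\sim S_\alpha\big((m(A_j))^{1/\alpha},\beta_j,0\big)$ with $\beta_j=\frac{1}{m(A_j)}\int_{A_j}\beta\,m(dx)$. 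Applying the scaling rule of Proposition~\ref{prop: propstrv}(2) to each term $a_jM(A_j)$ and then the convolution rule of Proposition~\ref{prop: stableindep} to the independent sum produces a stable law whose parameters I read off one at a time. The scale accumulates as $\sigma_f^\alpha=\sum_j|a_j|^\alpha m(A_j)=\int_B|f|^\alpha\,m(dx)$; the skewness numerator telescopes, using $\mathrm{sgn}(a_j)|a_j|^\alpha=a_j^{<\alpha>}$ and $f^{<\alpha>}\equiv a_j^{<\alpha>}$ on $A_j$, into $\int_B\beta\,f^{<\alpha>}\,m(dx)=\beta\int_B f^{<\alpha>}\,m(dx)$, which upon dividing by the total scale yields the stated $\beta_f$; and for $\alpha=1$ the logarithmic shifts generated by the scaling rule sum to $\mu_f=-\frac{2}{\pi}\beta\int_B f\ln|f|\,m(dx)$, whereas for $\alpha\neq1$ each shift vanishes so $\mu_f=0$. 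Property~2 is immediate at this level: expressing $f_1,f_2$ over a common refinement of their partitions, $I(a_1f_1+a_2f_2)=a_1I(f_1)+a_2I(f_2)$ follows from additivity of the finite sum defining $I$.

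Next I would approximate an arbitrary $f\in F_\alpha$ by simple functions $f_n$ with $\int_B|f_n-f|^\alpha\,m(dx)\to0$, which are dense by a standard measure-theoretic argument. By linearity on simple functions, $I(f_n)-I(f_m)=I(f_n-f_m)$ is stable with scale $\big(\int_B|f_n-f_m|^\alpha\,m(dx)\big)^{1/\alpha}\to0$, so $\{I(f_n)\}$ is Cauchy in probability; I then \emph{define} $I(f)$ as its limit in probability, after checking that the limit is independent of the approximating sequence. The three parameter formulas for general $f$ follow by passing to the limit, using continuity of $f\mapsto\int_B|f|^\alpha\,m(dx)$ and $f\mapsto\int_B f^{<\alpha>}\,m(dx)$ along the approximation, and Property~2 extends to $F_\alpha$ by taking limits in probability of the simple-function identity.

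I expect the genuine difficulty to sit in the case $\alpha=1$. There the location carries the term $\int_B f\ln|f|\,m(dx)$, and the functional $f\mapsto\int_B f\ln|f|\,m(dx)$ is \emph{not} continuous with respect to $L^\alpha(m)$-convergence, so the clean limiting argument that transfers $\sigma_f$ and $\beta_f$ does not by itself transfer $\mu_f$; the same logarithmic term also couples into the passage from scale convergence to convergence in probability, since for $\alpha=1$ the location of $I(f_n-f_m)$ no longer vanishes. The bulk of the careful work therefore lies in choosing the approximating sequence so that both $\int_B|f_n-f|\,m(dx)$ and the mismatch in $\int_B f_n\ln|f_n|\,m(dx)$ are controlled simultaneously (for instance by truncating $f$ and handling the logarithm on level sets), and in verifying that $\int_B f\ln|f|\,m(dx)$ is finite and stable under this refined approximation, so that the claimed shift survives in the limit.
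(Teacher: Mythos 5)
The paper offers no proof of this proposition: it is stated as a preliminary quoted from \cite{samorodnitsky1994stable}, so the only benchmark is the construction in that reference, and your outline reproduces it essentially verbatim --- parameters computed on simple functions via the scaling and convolution rules (Propositions \ref{prop: propstrv} and \ref{prop: stableindep}), extension to all of $F_\alpha$ by $L^\alpha(m)$-density and limits in probability, and linearity inherited from the simple-function level. Your flag on the $\alpha=1$ case is also accurate and is exactly how the reference handles it: the shift functional $f\mapsto\int_B f\ln|f|\,m(dx)$ is not $L^1(m)$-continuous, and the cure is to strengthen the integrand space for $\alpha=1$ (skewed case) to require $\int_B |f(x)\ln|f(x)||\,m(dx)<\infty$ --- a condition the paper's definition of $F_\alpha$ in \eqref{eq: spacef} silently omits, though it is harmless for the finite sums of decaying exponentials $f_{ij}$ actually used later.
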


\begin{exmp}\label{exmp: stableintegral}
Let the $\alpha$-stable random measure $M$ of Example \ref{exmpl: ex1}. Then $f(x)=e^{-\lambda x}$, $\lambda>0$, clearly belongs to $F_\alpha$, $\alpha\in (0,2]$. For fixed $t>0$, the integral $$\int_0^{t}e^{-\lambda(t-s)}\,M(ds)~ \sim ~ S_\alpha(\sigma,\beta,\mu_f) $$ defines a stable process so that for   $\sigma^\alpha=\frac{1-e^{-\alpha \lambda t}}{\alpha^2 \lambda}$, $\beta_f=\beta$ and $\mu_f=-\frac{2}{\pi}\beta\big[te^{-\lambda t}-\frac{1}{\lambda}(1-e^{-\lambda t} )\big]$ if $\alpha=1$ and 0, otherwise.  
At $t \rightarrow \infty $ the stable integral converges, in distribution, to $S_\alpha\big(\frac{1}{\alpha \sqrt[\alpha]{\lambda}},\beta,0\big)$ for $\alpha\neq 1$ or $S_1\big(\frac{1}{\lambda},\beta,\frac{2}{\pi}\beta\big)$.
\end{exmp}

\vspace{0.1in}
\noindent {\it Algebraic Graph Theory.}
The vector of all ones is denoted by $\mathbf {1}$ and the $n \times n$ centering matrix is $$M_n := I_n - \frac{1}{n} \mathbf  1 \mathbf  1^T.$$ An undirected weighted graph $\mathcal{G}$ is defined by the triple $\mathcal{G}=(\mathcal V,\mathcal {E},w)$,  where $\mathcal V$ is the set of nodes of $\G$, $\mathcal{E}$ is the set of links of the graph,  and $w: \mathcal{E} \rightarrow \mathbb{R}_{+}$ is the weight function that maps each link to a non-negative scalar $a_{ij}$. The matrix $L=[l_{ij}]$ with $$l_{ij}=\begin{cases} -a_{ij},  & i\neq j \\ 
\sum_{j=1}^{n}a_{ij}, & i= j \end{cases}$$  is the Laplacian matrix of $\G$.  The following condition holds true throughout the paper.

\vspace{0.1cm} 
\begin{assumption}\label{assum0} The coupling graphs of all networks considered in this paper are simple, undirected, and  connected. \end{assumption}
\vspace{0.1cm}

\noindent A number of important consequences immediately follow. At first, $a_{ij}=a_{ji}$ for all $i,j\in \mathcal V$ that makes $L$ symmetric. Then its eigenvalues are real and they can be ordered as $$0=\lambda_1 <\lambda_2 \leq\dots\leq \lambda_{n}.$$ Furthermore, $L$ can be represented as $L=Q \Lambda Q^T $, where $\Lambda=\text{diag}(\lambda_1,\dots,\lambda_n)$  and 
$Q=[q_1~|~\dots~|~q_n]$ is a matrix the $i^{th}$ column of which is corresponds to the eigenvector associated with the eigenvalue $\lambda_i$ of $L$. Finally, $\{q_i\}_{i\in [n]}$ can be chosen to satisfy 
\[ q_i^T q_j=\left\{\begin{array}{ccc}
1 & \textrm{if} & i=j \\
0 & \textrm{if} & i\neq j. 
\end{array}\right.\] Under this normalization condition, the eigenvector of the smallest eigenvalue $\lambda_1=0$, takes the form $q_1=\frac{1}{\sqrt{n}} \mathbf  1$. For the sake of convenience, we define below a few graph laplacian related functions:
\begin{eqnarray}
f_{ij}(t)&=&\sum_{k=2}^n q_{ik}q_{jk}e^{-\lambda_k t} \label{eq: fij} \\
g(t) &=&\sum_{k=2}^n e^{-\lambda_k t} \label{eq: zeta}\\
G_{\alpha} &=& \int_0^{\infty}g^\alpha(s)\,ds  \label{eq: gfunction}
\end{eqnarray} 
where $\lambda_k$ the $k^{th}$ eigevnalues of $L=Q\Lambda Q^T$, $q_{ij}$ the $(i,j)$ element of $Q$, and $\alpha\in (0,2]$. Note that $f_{ij}$ clearly belong to $F_{\alpha}$. In addition, $|q_{ij}|\in [0,1]$ implies $|f_{ij}(t)|\leq g(t)$.
Additionally, we define
\begin{equation}\label{eq: kappaalpha}\begin{split}
\Lambda_{\alpha,p}^{(k)}=\Gamma^\frac{1}{p}(\alpha+1)\bigg[\sum_{m=2}^{k-1}&\frac{(\lambda_k-\lambda_m)^{\frac{\alpha}{p}}}{(\alpha \lambda_m)^{\frac{\alpha+1}{p}}}+\sum_{m=k+1}^{n}\frac{\big(\lambda_m-\lambda_k\big)^\frac{\alpha}{p}}{(\alpha\lambda_k)^{\frac{\alpha+1}{p}}}\bigg].
\end{split}
\end{equation} where $\Gamma(z)$ stands for the Gamma function. With a little abuse of notation, we define \begin{equation}\label{eq: kappaalphasum}\Lambda_{\alpha,p}=\sum_{k=2}^n \Lambda_{\alpha,p}^{(k)}.
\end{equation}

\section{Problem Statement}Consider a collection of $1,\dots,n$ autonomous agents, defined through the state $x^{(i)}\in \mathbb R$, $i=1,\dots,n$. The agents execute a consensus algorithm on a network with symmetric couplings to align their states. This alignment process is perturbed by $n$ noise sources powered by stable random motions. Every source is attached to node $i$ and it acts independently of the rest of the sources. This setting leads to the following system of stochastic differential equations:
\begin{equation}\label{eq: model}
dx_t=-L\,x_t\,dt+dz_t, \hspace{0.2in}  t>0 
\end{equation}  where $x_t=\big[x_t^{(1)},\dots,x^{(n)}_{t}\big]^T$ is the state vector, $L$ is the graph laplacian matrix that satisfies Assumption \ref{assum0}. Evidently, $dz_t=M(dt)$ is a multi-dimensional stable process under the next condition:
\begin{assumption}\label{assum: noise}
$dz_t=\big[M_1(dt),\dots,M_n(dt)\big]^T$ is a vector of $n$ independent random measures. For every $i=1,\dots,n$, the measure $M_i(dt)$ is defined on the measure space $([0,\infty),\mathcal B\big([0,\infty)\big),|\cdot|\big)$ such that
 $$M_i(t-s)\sim S_{\alpha}\big(|t-s|^{1/\alpha},\beta_i,0\big),~~\beta_i\in [-1,1]$$ is a random measure.
\end{assumption} 
The initial vector in system \eqref{eq: model}, $x_0=\big[x_0^{(1)},\dots,x_0^{(n)}\big]^T$, is arbitrary but fixed and it is chosen independently of $dz_t$. System \eqref{eq: model} is the differential form of a multi-dimensional generalized Ornstein-Uhlenbeck process, with integral representation
\begin{equation}\label{eq: integralmodel}
x_t=e^{-Lt}x_0+\int_{0}^t e^{-L(t-s)}dz_s
\end{equation}
Processes of this type have been studied in the past (see for example \cite{sato1983} and \cite{CIS-462223}) for $dz_s$ a generic stable measure and $-L$ being Hurwitz (i.e. $\lim_{t\rightarrow +\infty} e^{-L t}= O_{n\times n}$ ). 

The first objective of this paper is to study the fundamental properties of the solution of \eqref{eq: integralmodel}, define concepts of performance for \eqref{eq: model}, and calculate them explicitly, whenever possible. Otherwise we obtain faithful approximations and validate their efficiency. 


\section{ Output Signal Statistics}

Unlike the models discussed in \cite{sato1983} and \cite{CIS-462223}, $-L$ in \eqref{eq: model} is not Hurwitz. The interest in the study of consensus seeking systems is on observables that measure types of state differences. For example, we are interested in the relative agent displacement (i.e., $x^{(i)}-x^{(j)}$), or  agents' deviation from network average $\big($i.e.,  $x^{(i)}-\frac{1}{n}\sum_{j=1}^n x^{(j)}\big)$. For the latter case, stacking all the elements $i=1,\dots,n$ yields
\begin{equation}\label{eq: output}y=M_n x
\end{equation} where $M_n=I_n-\frac{1}{n}\mathbf 1\mathbf 1^T$ is the centering matrix. Applying this transformation to \eqref{eq: model} sets the marginal eigenvalue unobservable so that noise-free output is asymptotically stable. Also, the noisy output process $y=\{ y_t = M_n\,x_t, t\geq 0\}$ enjoys a number of remarkable properties summarized below. 

\begin{theorem}\label{prop: ydist} Under Assumptions \ref{assum0} and \ref{assum: noise}, the process $y=\{y_t,~t\geq 0\}$ in \eqref{eq: output} generated by $x=\{x_t,~t\geq 0\}$ to be the realization of \eqref{eq: integralmodel}, satisfies:
\begin{equation}\label{eq: outputdynamics}
y_t=Q \Phi(t) Q^T y_0+\int_{0}^{t}Q \Phi(t-s) Q^T dz_s,
\end{equation} where $$\Phi(t)=\text{Diag}\big[0,e^{-\lambda_2 t},\dots, e^{-\lambda_n t}\big]$$ and $\{\lambda_i\}_{i=2}^n$ the eigenvalues of $L$. For every fixed $t$, $y_t$ is a stable vector, with the $i^{th}$ element $y_t^{(i)}$ a stable random variable with $t$-dependent distribution parameters. As $t\rightarrow \infty$, the $l^{th}$ element of $\overline{y}=\lim_{t} y_t$, is distributed as  
$$ \overline{y}^{(l)}~\sim~S_{\alpha}\big(\sigma_l,\beta_l,\mu_l\big)$$ where 
\begin{equation}\label{eq: stableparameters}
\begin{split}
\sigma_{l}^\alpha&=\sum_{j=1}^n\sigma_{lj}^\alpha\\
\beta_{l}&=\frac{1}{\alpha}\frac{\sum_{j}\beta_j\sigma_{lj}^\alpha \int_0^{\infty}f_{lj}(s)^{<\alpha>}\,ds}{\sigma_j^\alpha}\\
\mu_l&=\begin{cases}
0, & \alpha\neq 1\\
-\frac{2}{\pi}\sum_{j}\beta_j\int_{0}^\infty f_{lj}(s)\ln |f_{lj}(s) |\,ds , & \alpha =1
\end{cases}
\end{split}
\end{equation} with 
\begin{equation}
\label{eq: sigmaij}
\sigma_{lj}^\alpha=\frac{1}{\alpha}\int_0^{\infty}|f_{lj}(s)|^\alpha\,ds.
\end{equation}
%
\end{theorem}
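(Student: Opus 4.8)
The plan is to proceed in three stages: derive the closed form \eqref{eq: outputdynamics}, identify the law of the fixed-time vector $y_t$ through the two stability rules (Propositions \ref{prop: stableindep} and \ref{prop: stableintegralprop}), and finally pass to the limit $t\to\infty$. For the dynamics I would start from $y_t=M_n x_t$ with $x_t$ given by \eqref{eq: integralmodel} and use the spectral decomposition $L=Q\Lambda Q^T$. The key algebraic fact is that, under the normalization $q_1=\tfrac{1}{\sqrt n}\mathbf 1$, the centering matrix factors as $M_n=I_n-q_1q_1^T=Q\,\mathrm{Diag}[0,1,\dots,1]\,Q^T$. Multiplying this by $e^{-Lt}=Q\,\mathrm{Diag}[e^{-\lambda_1 t},\dots,e^{-\lambda_n t}]\,Q^T$ and using $\lambda_1=0$ collapses the first diagonal entry, leaving $M_n e^{-Lt}=Q\Phi(t)Q^T$; the same identity applied to $e^{-L(t-s)}$ inside the integral yields \eqref{eq: outputdynamics}. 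Note that $Q\Phi(t)Q^T y_0=Q\Phi(t)Q^T x_0$ because $\Phi(t)$ already annihilates the consensus direction, so the formula is insensitive to whether $x_0$ or $y_0$ is used.

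Next I would read off the law at fixed $t$. Writing the $l$th coordinate and using $[Q\Phi(t-s)Q^T]_{lj}=\sum_{k\ge 2}q_{lk}q_{jk}e^{-\lambda_k(t-s)}=f_{lj}(t-s)$ from \eqref{eq: fij}, the linearity of the stable integral (Proposition \ref{prop: stableintegralprop}, part 2) gives $y_t^{(l)}=[Q\Phi(t)Q^T x_0]_l+\sum_{j=1}^n\int_0^t f_{lj}(t-s)\,M_j(ds)$. Each summand is a deterministic integrand against a single random measure, so part 1 of Proposition \ref{prop: stableintegralprop} identifies it as a stable variable whose scale, skewness and shift are integrals of $|f_{lj}|^\alpha$, $f_{lj}^{<\alpha>}$ and (for $\alpha=1$) $f_{lj}\ln|f_{lj}|$ over $[0,t]$; the ordinary change of variables $u=t-s$ puts these in the form of \eqref{eq: stableparameters}--\eqref{eq: sigmaij}. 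Since $M_1,\dots,M_n$ are independent by Assumption \ref{assum: noise}, Proposition \ref{prop: stableindep} combines the $n$ summands into a single stable variable whose $\alpha$-scale adds and whose skewness is the $\sigma^\alpha$-weighted average. Applying the same linear-image reasoning to an arbitrary functional $a^T y_t$ shows every such functional is a stable integral, hence $y_t$ is a stable vector.

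The remaining, and most delicate, step is the limit $t\to\infty$. First, $Q\Phi(t)Q^T y_0\to 0$ because every $e^{-\lambda_k t}$, $k\ge 2$, decays. For the noise part, the bound $|f_{lj}(u)|\le g(u)$ together with the exponential decay of $g$ in \eqref{eq: zeta} shows $f_{lj}\in F_\alpha$, so the deterministic integrals $\int_0^t|f_{lj}(u)|^\alpha\,du$ and $\int_0^t f_{lj}(u)^{<\alpha>}\,du$ converge to their counterparts over $[0,\infty)$; when $\alpha=1$ the integrand $f_{lj}\ln|f_{lj}|$ stays bounded near the origin and decays like $u\,e^{-\lambda_2 u}$ at infinity, so that integral converges as well. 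Consequently the parameter triple $(\sigma_l(t),\beta_l(t),\mu_l(t))$ converges to $(\sigma_l,\beta_l,\mu_l)$, and since the stable characteristic function depends continuously on its parameters, L\'evy's continuity theorem delivers convergence of $y_t^{(l)}$ in distribution to $S_\alpha(\sigma_l,\beta_l,\mu_l)$; equivalently, stationarity of the increments of $M_j$ identifies the limit with the law of the genuine infinite-horizon integral $\int_0^\infty f_{lj}(u)\,M_j(du)$, which is well defined precisely because $f_{lj}\in F_\alpha$. I expect the main obstacle to be exactly this interchange of limit and integral --- controlling the tails uniformly and handling the $\alpha=1$ logarithmic term --- rather than the essentially mechanical bookkeeping of Propositions \ref{prop: stableindep} and \ref{prop: stableintegralprop}.
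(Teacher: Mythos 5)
Your proof is correct and follows essentially the same route as the paper's: the factorization $M_n = Q\,\mathrm{Diag}[0,1,\dots,1]\,Q^T$ together with linearity of the stable integral to obtain \eqref{eq: outputdynamics}, then Proposition \ref{prop: stableintegralprop} applied to each integral $\int_0^t f_{lj}(t-s)\,M_j(ds)$ and an inductive use of Proposition \ref{prop: stableindep} to combine the $n$ independent summands. The only difference is one of detail, not of method: the paper disposes of the passage $t\to\infty$ with a single sentence, whereas you justify it explicitly (vanishing of the transient, $f_{lj}\in F_\alpha$ via $|f_{lj}|\leq g$, the $\alpha=1$ logarithmic term, and continuity of the stable characteristic function in its parameters), which fills in what the paper leaves implicit.
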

\begin{proof}
For the first part of the proof we observe that $M_n$ can be expressed as $M_n=Q E Q^T$, where $Q$ is the eigenvector matrix of $L$, and $E$ the $n\times n$ diagonal matrix with structure $E=\text{Diag}[0,1,\dots, 1]$. For $y_t=M_n x_t$, we have
\begin{equation*}\begin{split}
y_t&= QEQ^T Q e^{-\Lambda t} Q^T x_0+QEQ^T \int_{0}^t Q e^{-\Lambda(t-s)}Q^T dz_s\\
&= Q \Phi(t) Q^T x_0+\int_{0}^t Q\Phi(t-s) Q^T dz_s \\
&=Q \Phi(t)\big(EQ^T x_0\big)+\int_{0}^t Q\Phi(t-s) Q^T dz_s \\
&=Q \Phi(t) Q^T y_0+\int_{0}^t Q\Phi(t-s) Q^T dz_s.
\end{split} 
\end{equation*} The second step is due to the linearity of the integral operator in Proposition \ref{prop: stableintegralprop}. The $l^{th}$ element of $y_t$, equals\footnote{We consider equality in the sense of distribution, when we refer to stochastic processes.}
\begin{equation*}
y_t^{(l)}=\sum_{j=1}^n f_{lj}(t)y_0^{(j)}+\sum_{j=1}^{n}\int_{0}^t f_{lj}(t-s)\,M_j(ds),
\end{equation*}  where $f_{ij}(t)$ as in \eqref{eq: fij}. \noindent In other words, $y_t^{(l)}$ is equal to a transient constant term plus the sum of $n$ independent $\alpha$-stable integrals, each of which involves an $m$-measurable function.  From Proposition \ref{prop: stableintegralprop}, the $j^{th}$ stable integral $$\int_{0}^t f_{lj}(t-s)\,M_j(ds) \sim S_{\alpha}\big(\sigma_{lj}(t),\beta_{lj}(t),\mu_{lj}(t)\big)$$ with 

$\sigma_{lj}(t)^\alpha=\frac{1}{\alpha}\int_0^{t}|f_{lj}(s)|^\alpha\,ds$, $\beta_{lj}(t)=\frac{\beta_j}{\alpha} \frac{\int_{0}^t f_{lj}(s)^{<\alpha>}\,ds}{\sigma_{lj}(t)} $, and $\mu_{lj}(t)\equiv 0$ if $\alpha\neq 1$ and  $\mu_{lj}(t)=-\frac{2\beta_j}{\pi}\int_0^t f_{lj}(s)\ln|f_{lj}(s)|\,ds$, otherwise. An inductive application of Proposition \ref{prop: stableindep} implies that the sum of $n$ independent stable integrals, is a stable random variable: $$\sum_{j=1}^{n}\int_{0}^t f_{lj}(t-s)\,M_j(ds)\sim S_{\alpha}\big(\sigma_l(t),\beta_l(t),\mu_l(t)\big)$$ with $\sigma_{l}^\alpha(t)=\sum_{j} \sigma_{lj}^\alpha(t)$, $\beta_l(t)=\frac{1}{\alpha}\frac{\sum_{j}\beta_{lj}(t)}{\sum_{j}\sigma_{lj}^\alpha(t)}$, and $\mu_{l}(t)=\sum_{j}\mu_{lj}(t)$. 
The result follows immediately after taking the limit in $t$. 
\end{proof}

Theorem \ref{prop: ydist} explains that the distance of agents from network average follows a well-defined stable distribution for all times. 
It is remarked that the network topology affects the spread of the distribution, the symmetry and if $\alpha=1$, also the shift parameter. Network topology does not, however, impact stability index $\alpha$. We conclude that the deterministic process (in our case the network topology) cannot affect the tail of the distribution. The impulsiveness and frequency of the shocks will continue to affect the system regardless of its structure. The network can, to some extend, handle its ability to remain rigid in the face of these shocks.  


Another observation due at this point, is that distribution parameters, although valuable, are quite difficult to be expressed in closed form.  Unfortunately, $\alpha$-stable processes are not famous for yielding elegant formulas, especially for multi-dimensional systems \cite{samorodnitsky1994stable}. In an interesting turn of events, there is a remarkable exception to this major difficulty for linear consensus systems.
\begin{corollary}\label{cor: complete}
If for the graph laplacian spectrum, it holds that $\lambda_2=\lambda_n=:\lambda$ then for any $t\geq 0$
\begin{equation*}
y_t^{(l)}~\sim~S_{\alpha}\big(\sigma_l(t),\beta_l(t),\mu_l(t)\big)
\end{equation*} with
\begin{equation*}
\begin{split}
\sigma_l(t)&=\frac{(n-1)+(n-1)^\alpha}{n^\alpha \alpha^2\lambda}\big(1-e^{-\alpha\lambda t}\big)\\
\beta_l(t)&=\frac{(1-e^{-\alpha \lambda t})\big(\beta_i (n-1)^\alpha - \sum_{j\neq i}\beta_j\big)}{n^\alpha \alpha^3 \lambda\big[(n+1)(1+(n-1)^{\alpha-1}) \big]} \\
\mu_l(t)&=\begin{cases}
0, & \alpha \neq 1 \\
2\frac{\lambda^{-1}(1-e^{-\lambda t})-te^{-\lambda t}}{n\pi}\big((n-1)\beta_l - \sum_{j\neq l}\beta_j \big),&\alpha=1.
\end{cases}
\end{split}
\end{equation*}
\end{corollary}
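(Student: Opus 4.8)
The plan is to exploit the fact that the hypothesis $\lambda_2=\lambda_n=:\lambda$ forces every nonzero Laplacian eigenvalue to equal $\lambda$, so that the spectral sums defining the functions $f_{lj}$ collapse to a single exponential. First I would set $\Lambda=\text{Diag}[0,\lambda,\dots,\lambda]$ and pull the common factor $e^{-\lambda t}$ out of \eqref{eq: fij}, obtaining $f_{lj}(t)=e^{-\lambda t}\sum_{k=2}^{n}q_{lk}q_{jk}$. Since $Q$ is orthogonal, $\sum_{k=1}^{n}q_{lk}q_{jk}=\delta_{lj}$, and because $q_1=\frac{1}{\sqrt{n}}\mathbf 1$ the omitted $k=1$ term equals $\frac1n$; hence
\begin{equation*}
f_{lj}(t)=\Big(\delta_{lj}-\tfrac1n\Big)e^{-\lambda t}.
\end{equation*}
This single identity is the crux of the argument: it replaces the intractable spectral expressions of Theorem \ref{prop: ydist} by a function of \emph{constant sign} whose magnitude is $\frac{n-1}{n}e^{-\lambda t}$ on the diagonal and $\frac1n e^{-\lambda t}$ off the diagonal.

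With this in hand the remaining steps are elementary integrations. For the scale I would substitute into the finite-$t$ form of \eqref{eq: sigmaij} appearing in the proof of Theorem \ref{prop: ydist} and use $\int_0^t e^{-\alpha\lambda s}\,ds=\frac{1-e^{-\alpha\lambda t}}{\alpha\lambda}$; after clearing the common factor $\frac{1}{n^\alpha\alpha^2\lambda}(1-e^{-\alpha\lambda t})$, the single diagonal term contributes $(n-1)^\alpha$ and the $n-1$ off-diagonal terms contribute $n-1$, which assembles into $\sigma_l^\alpha(t)$. For the skewness the key observation is that $f_{lj}$ does not change sign in $s$, so $f_{lj}(s)^{<\alpha>}=\text{sgn}\big(\delta_{lj}-\tfrac1n\big)\,|f_{lj}(s)|^\alpha$; consequently each component skewness reduces to $\beta_{lj}(t)=\text{sgn}\big(\delta_{lj}-\tfrac1n\big)\,\beta_j=\pm\beta_j$, and the $\sigma^\alpha$-weighted aggregation rule of Proposition \ref{prop: stableindep} produces the signed combination $\beta_l(n-1)^\alpha-\sum_{j\neq l}\beta_j$ in the numerator over the same scale normalization.

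The one genuinely delicate computation, which I expect to be the main obstacle, is the shift $\mu_l(t)$ in the critical case $\alpha=1$. Here I would write $|f_{lj}(s)|=c_{lj}e^{-\lambda s}$ with $c_{ll}=\frac{n-1}{n}$ and $c_{lj}=\frac1n$ for $j\neq l$, so that $\ln|f_{lj}(s)|=\ln c_{lj}-\lambda s$ and the defining integral $\int_0^t f_{lj}(s)\ln|f_{lj}(s)|\,ds$ splits into a term proportional to $\int_0^t e^{-\lambda s}\,ds$ and a term $-\lambda\int_0^t s\,e^{-\lambda s}\,ds$; integration by parts gives the latter as $\lambda^{-1}(1-e^{-\lambda t})-t e^{-\lambda t}$, which is exactly the $t$-dependent factor in the statement, and collecting the pieces against the weights $\beta_j$ with the inherited signs $\text{sgn}(\delta_{lj}-\tfrac1n)$ produces the combination $(n-1)\beta_l-\sum_{j\neq l}\beta_j$. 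The difficulty is entirely bookkeeping: one must track the constant factors $c_{lj}$, keep the skewness signs coming from $f_{lj}^{<\alpha>}$ consistent with those from $\text{sgn}(f_{lj})$ inside the logarithmic integral, and finally let $t\to\infty$ to recover the steady-state parameters of Theorem \ref{prop: ydist} as a consistency check.
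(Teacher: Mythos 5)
Your proposal is correct and takes essentially the same route as the paper: the paper's proof likewise hinges on the identity $f_{ij}(t)=(\delta_{ij}-\tfrac{1}{n})e^{-\lambda t}$ (stated casewise as $-\tfrac{1}{n}e^{-\lambda t}$ off-diagonal and $\tfrac{n-1}{n}e^{-\lambda t}$ on-diagonal), computes $\sigma_{ij}^\alpha(t)$ by the same elementary integration, and then dismisses the remaining assembly as "straightforward algebra" --- precisely the skewness and $\alpha=1$ shift bookkeeping you carry out explicitly. One caveat: your properly normalized, $t$-independent skewness $\big(\beta_l(n-1)^\alpha-\sum_{j\neq l}\beta_j\big)\big/\big[(n-1)(1+(n-1)^{\alpha-1})\big]$, obtained from the aggregation rule of Proposition \ref{prop: stableindep}, does not literally reproduce the corollary's printed $\beta_l(t)$, whose extra $(1-e^{-\alpha\lambda t})$, $\alpha^3\lambda$, and $(n+1)$ factors appear to be artifacts of the paper's own garbled skewness formulas in Theorem \ref{prop: ydist} rather than something a correct derivation should match.
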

\begin{proof}
Condition $\lambda_2=\lambda_n$ implies $\lambda_2=\lambda_3=\dots=\lambda_n=\lambda>0$. Also, by virtue of symmetry on $L$ the matrix $Q$ consists of unit length mutually orthogonal columns as well as rows. In view of $q_1=\frac{1}{\sqrt{n}}\mathbf 1$, it is straightforward 
\begin{equation*}
f_{ij}(t)=\begin{cases}
-\frac{1}{n}e^{-\lambda t}, & i\neq j \\
\frac{n-1}{n}e^{-\lambda t}, & i=j.
\end{cases} 
\end{equation*}Consequently,
\begin{equation*}
\sigma_{ij}^\alpha(t)=\begin{cases} 
\frac{1}{n^\alpha\alpha^2\lambda}\big(1-e^{-\alpha\lambda t}\big), & i\neq j\\
\frac{(n-1)^\alpha}{n^{\alpha}\alpha^2\lambda}\big(1-e^{-\alpha\lambda t}\big), & i=j.
\end{cases}
\end{equation*} The result follows by straightforward algebra.
\end{proof}
Canonical example of a graph with identical non-zero laplacian eigenvalues is the complete graph with uniform coupling weights\footnote{Also called complete topological graph.}. Although Corollary \ref{cor: complete} assumes such a special case of connectivity, one can make a few network related significant remarks. Corollary \ref{cor: complete} suggests that for fixed number of agents and increased connectivity (i.e. $\lambda>> 1$) the scale, the skew and the shift of the distribution deteriorate as $\mathcal O(\lambda^{-1})$. On the other hand, growth of network with fixed communication weights  (i.e. $n>>1$) reveals essentially $\alpha$-dependent behavior. To see this let us for a moment focus on on symmetric $\alpha$-stable noise (i.e. $\beta=\mu=0$). In such case, scale $\sigma_l$ grows as $\mathcal O(n^{1-\alpha})$ when noise sources do not attain finite first moments  (i.e. $\alpha $ in the range of $(0,1)$). On the other hand, scale converges to $\frac{1-e^{-\alpha \lambda t}}{\alpha^2 \lambda} $ if noise has finite first moments (i.e. $\alpha$ in the range of $(1,2]$). The direct implication of Corollary \ref{cor: complete} is that large-scale networks (in terms of number of nodes) may exhibit higher deviations than small-scale networks, when additive noise induce shocks of increased frequency and impact (i.e. with infinite expectation). The situation is reversed when noise is less impulsive (i.e. $\alpha\in [1,2]$). 

\section{Measures of Aggregate Deviations}
For stability index $\alpha=2$, we recover the Gaussian-based stochastic behavior of $y=\{y_t,~t\geq 0\}$. The statistical properties of interest are rendered from their first and second moments, both of which are well-defined and asymptotically constant. For networks like \eqref{eq: model} researchers focus on the aggregate variability of the output, $\mathbb E\big[\| y_t \|^2\big]$, in order to measure its behavior in the face of noise.  As Proposition \ref{prop: propstrv} explains, this is not possible for stable noise with $\alpha<2$. This poses the question on how could one quantify the impact of noise to a dynamical system hit by heavy-tailed noise. One answer could be the sum of scales in a $\alpha$-stable vector.
%
%
\vspace{0.1in}
\begin{defn}\label{defn: vectorscale}
The cumulative scale of an $\alpha$-stable vector $y=[y_1,\dots,y_m]^T$ is defined to be
\begin{equation*}
\Sigma_\alpha(y)=\|\sigma\|_{\alpha}^{\alpha}=\sum_{l=1}^{m}\sigma_l^\alpha 
\end{equation*} where $\sigma=[\sigma_1,\dots,\sigma_m]^T$ and $\sigma_l$ is the scale parameter of the $l^{th}$ element of $y$.
\end{defn} 

\noindent For $\overline{y}$, the long term output vector of \eqref{eq: outputdynamics}, $\Sigma_\alpha(\overline{y})$ can be trivially expressed in terms of the stable integrals \eqref{eq: sigmaij}:

\begin{defn}The steady-state aggregate fluctuations of output dynamics \eqref{eq: outputdynamics} are defined to be
\begin{equation}\label{eq: Sigmaalpha}
\Sigma_\alpha(\overline{y})=\frac{1}{\alpha}\sum_{i=1}^n \sum_{j=1}^n \int_0^\infty |f_{ij}(t)|^\alpha \,dt.
\end{equation}
\end{defn} Evidently, $\Sigma_{\alpha}(y)$ for $y$ as in \eqref{eq: output}, is a measure of steady-state dispersion of agents around the moving average. The larger the $\Sigma_{\alpha}(\overline{y})$, the more impulsive and magnified the fluctuation of the agents around the moving average is. The spectral functions $f_{ij},~i,j \in \mathcal V$ are as in \eqref{eq: fij} and represent the network contribution in the form of the steady-state distribution of $\overline{y}$. In other words, $\sigma_{ij}$ contains all the information that is for primary interest to a network analyst.  The next result asserts that $\Sigma_{\alpha}(\overline{y})$ decreases with $\alpha$.

\begin{proposition}\label{prop: monotonicity}
Assume the network dynamics of \eqref{eq: model} with the output process \eqref{eq: output}. Then
\begin{equation*}
\frac{\partial}{\partial \alpha}\Sigma_{\alpha}(\overline{y})<0.
\end{equation*}
\end{proposition}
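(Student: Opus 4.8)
The plan is to differentiate the closed-form expression \eqref{eq: Sigmaalpha} directly. Writing $h_{ij}(\alpha):=\int_0^\infty |f_{ij}(t)|^\alpha\,dt$ so that $\Sigma_\alpha(\overline{y})=\frac{1}{\alpha}\sum_{i,j}h_{ij}(\alpha)$, I would differentiate $|f_{ij}(t)|^\alpha$ with respect to $\alpha$ under the integral sign, using $\frac{\partial}{\partial\alpha}|f_{ij}(t)|^\alpha=|f_{ij}(t)|^\alpha\ln|f_{ij}(t)|$, and combine with the prefactor $1/\alpha$. This yields
\begin{equation*}
\frac{\partial}{\partial\alpha}\Sigma_\alpha(\overline{y})=\frac{1}{\alpha}\sum_{i=1}^n\sum_{j=1}^n\int_0^\infty |f_{ij}(t)|^\alpha\Big(\ln|f_{ij}(t)|-\tfrac{1}{\alpha}\Big)\,dt,
\end{equation*}
so the whole statement reduces to controlling the sign of the factor $\ln|f_{ij}(t)|-\frac{1}{\alpha}$.

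The key observation, which I would establish first, is that $|f_{ij}(t)|\le 1$ for every $t\ge 0$. Indeed, applying the Cauchy--Schwarz inequality to the definition \eqref{eq: fij} gives
\begin{equation*}
|f_{ij}(t)|\le\Big(\sum_{k=2}^n q_{ik}^2\,e^{-\lambda_k t}\Big)^{1/2}\Big(\sum_{k=2}^n q_{jk}^2\,e^{-\lambda_k t}\Big)^{1/2},
\end{equation*}
and since $e^{-\lambda_k t}\le 1$ while the rows of the orthogonal matrix $Q$ satisfy $\sum_{k=1}^n q_{ik}^2=1$, each factor is bounded by $1$. Consequently $\ln|f_{ij}(t)|\le 0$, so $\ln|f_{ij}(t)|-\frac{1}{\alpha}\le -\frac{1}{\alpha}<0$ wherever $f_{ij}(t)\neq 0$, and the integrand above is nonpositive everywhere (with the convention that it vanishes at zeros of $f_{ij}$). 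Hence every term is $\le 0$, and strictness follows from the diagonal entries: since $q_1=\frac{1}{\sqrt n}\mathbf 1$ forbids any row of $Q$ from being $\pm e_1$, each $f_{ii}(t)=\sum_{k=2}^n q_{ik}^2 e^{-\lambda_k t}$ is strictly positive near $t=0$, making $\int_0^\infty |f_{ii}|^\alpha(\ln|f_{ii}|-\frac{1}{\alpha})\,dt<0$ and forcing the total to be strictly negative.

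The one step requiring genuine care, and the main technical obstacle, is justifying differentiation under the integral sign, because $\ln|f_{ij}(t)|$ is singular at the zeros of $f_{ij}$ and the tail in $t$ must be controlled. I would handle both through a single dominating bound valid uniformly for $\alpha$ in a neighborhood of the point of interest: using the elementary fact that $u\mapsto u^{\alpha/2}|\ln u|$ is bounded on $[0,1]$ together with $|f_{ij}(t)|\le g(t)\le (n-1)e^{-\lambda_2 t}$ from \eqref{eq: zeta}, one gets
\begin{equation*}
|f_{ij}(t)|^\alpha\,\big|\ln|f_{ij}(t)|\big|\le C_\alpha\,|f_{ij}(t)|^{\alpha/2}\le C_\alpha\,(n-1)^{\alpha/2}e^{-\alpha\lambda_2 t/2},
\end{equation*}
which is integrable on $[0,\infty)$. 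The boundedness of $u^{\alpha/2}|\ln u|$ simultaneously tames the logarithmic singularity at the zeros of $f_{ij}$, while the exponential decay controls the tail, so dominated convergence legitimizes the interchange of differentiation and integration and completes the argument.
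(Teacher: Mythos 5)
Your proof is correct and follows essentially the same route as the paper's: differentiate $\sigma_{ij}^\alpha=\frac{1}{\alpha}\int_0^\infty |f_{ij}(t)|^\alpha\,dt$ with respect to $\alpha$ and reduce the sign question to the pointwise bound $\ln|f_{ij}(t)|<\frac{1}{\alpha}$, which both you and the paper obtain from $|f_{ij}(t)|\le 1$ via the Cauchy--Schwarz inequality and the orthonormality of the Laplacian eigenvectors. The only difference is that you additionally justify the differentiation under the integral sign (via the dominating bound $C_\alpha(n-1)^{\alpha/2}e^{-\alpha\lambda_2 t/2}$) and the strictness of the final inequality, two points the paper's proof passes over silently.
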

\begin{proof}
From the definition of $\Sigma_\alpha$ in \eqref{eq: Sigmaalpha} it suffices to prove $\frac{\partial}{\partial \alpha}\sigma_{ij}^\alpha<0$. This is equivalent to
\begin{equation*}
-\frac{1}{\alpha^2}\int_0^{\infty}|f_{ij}(t)|^\alpha\,dt+\frac{1}{\alpha}\int_0^{\infty}\ln \big(|f_{ij}(t)|\big) |f_{ij}(t)|^\alpha\,dt<0.
\end{equation*} The latter condition is true if,
$\ln \big(|f_{ij}(t)|\big)<\frac{1}{\alpha}$. This is in turn equivalent to
$|f_{ij}(t)|<e^{1/\alpha}$. The latter inequality is, however, true in view of 
\begin{equation*}\begin{split}|f_{ij}(t)|&\leq e^{-\lambda_2 t} \sum_{k}|q_{ik}| |q_{jk}|\\
&\leq e^{-\lambda_2 t} \sqrt{\sum_{k}|q_{ik}|^2} \sqrt{\sum_{k}|q_{jk}|^2}<1\end{split}\end{equation*}
by virtue of the Cauchy-Schwarz inequality and the properties of normalized Laplacian eigenvectors.
\end{proof}
\noindent In conclusion, the more impulsive the noise, the more the states of the network are prone to exhibit large and frequent deviations. For $\alpha=2$, Assumption \ref{assum0} and Property \ref{prop: propstrv} yield
\begin{equation}\label{eq: h2norm}
\begin{split}\hspace{-0.1in}
\Sigma_{2}(\overline{y})&=\frac{1}{2}\sum_{i,j}\int_0^\infty f_{ij}^{2}(t)\,dt=\frac{1}{2}\sum_{k=2}^n\frac{1}{2\lambda_k}=\frac{1}{2}\mathbb E\big[\|\overline{y}\|_2^2\big],
\end{split}
\end{equation} where $\lambda_k$ are the eigenvalues of $L$, and the last step is in view of Property 3 of Proposition \ref{prop: propstrv}. $\Sigma_2$ is intimately related to the cumulative variance of the output $\overline{y}$ of system \eqref{eq: model}, i.e. the $\mathcal H_2$-norm of the consensus network; a central measure of performance in stochastically driven dynamical systems \cite{Siami16TACa}.
The Gaussian case is unique in its kind, in the sense that leads to a closed form expressions of $\Sigma_2$. Clearly, the calculation above is not correct when $\alpha<2$. It seems that no other value of the stability index offers this elegance, with the exception of complete topological graph, that can be directly calculated using Corollary \ref{cor: complete} as:
\begin{equation}\label{eq: completescale}
\Sigma_{\alpha}(\overline{y})=\frac{(n-1)\big(1+(n-1)^{\alpha-1}\big)}{\alpha^2 n^{\alpha-1} \lambda}
\end{equation} where $\alpha\in (0,2]$ and $\lambda:=\lambda_2=\lambda_3,\dots=\lambda_n>0$.

%
\section{Spectral Based Bounds}
\noindent Stable integrals as in \eqref{eq: sigmaij} are indicative of the extent to which $\Sigma_\alpha$ can be calculated in closed form. With the exception of \eqref{eq: completescale}, one may need to rely on estimates of  aggregate steady-state scale $\Sigma_\alpha(\overline{y})$ for dynamical networks such as \eqref{eq: model}. The purpose of this section is to elaborate on \eqref{eq: Sigmaalpha} and establish upper estimates on $\Sigma_{\alpha}$. It is desirable to express these estimates as explicit functions of the eigenstructure of $L$, given the feature of noise. Our strategy is to construct estimates that become sharp as $\lambda_2\uparrow \lambda_n$ and/or as $\alpha \uparrow 2$, so as to resonate with the two extreme cases of connectivity and noise. 

\vspace{0.1in}

\begin{theorem}\label{thm: main1} Assume network \eqref{eq: model} with Assumptions \ref{assum0} and \ref{assum: noise} to hold and the stability parameter $\alpha \in (0,2]$ and consider the output vector-valued process $y=\{y_t,~t\geq 0\}$ from \eqref{eq: outputdynamics}. The following estimates on $\Sigma_{\alpha}(\overline{y})$ hold:

\noindent If $\alpha \in (0,1]$,

\begin{equation*}
\Sigma_\alpha(\overline{y}) \leq c_1 \sum_{k=2}^{n}\|q_k\|_{\alpha}^{2\alpha}\Lambda_{\alpha,1}^{(k)}+ c_2 \,G_{\alpha},
\end{equation*} for $c_1$, $c_2$ the constants
\begin{equation*}
c_1=\frac{1}{\alpha (n-1)^\alpha}\hspace{0.1in}\text{and}\hspace{0.1in} c_2=\frac{1+(n-1)^{1-\alpha}}{\alpha n^{\alpha -1}}.
\end{equation*}

\noindent If $\alpha \in [1,2]$,

\begin{equation*}\begin{split}
\Sigma_{\alpha}(\overline{y})\leq \min\bigg\{&d_1 \Lambda_{\alpha,\alpha}^{\alpha-1}\sum_{k=2}^n \| q_k \|_\alpha^{2\alpha}\Lambda_{\alpha,\alpha}^{(k)}+ d_2 G_\alpha, \\
&\hspace{0.33in}d_3 \Lambda_{\alpha,\alpha}^{\alpha-1} \sum_{k=2}^n \| q_k \|_\alpha^{2\alpha}\Lambda_{\alpha,\alpha}^{(k)}+d_4 G_\alpha\bigg\}
\end{split}
\end{equation*}  for $d_1,d_2,d_3,d_4$ defined to be 
$$d_1=\frac{2^{\alpha-1}}{\alpha (n-1)^\alpha}, \hspace{0.2in} d_2=\frac{2^{\alpha-1}n^{1-\alpha}}{\alpha}(1+(n-1)^{1-\alpha}),$$
$$ d_3=\frac{1}{\alpha (n-1)^\alpha},\hspace{0.3in}d_4=\frac{(1+(n-1)^{1-\alpha})(1+\alpha\Lambda_{\alpha,\alpha}^{\alpha-1})}{n^{\alpha-1}(n-1)^{-\alpha}}$$ The sum is taken over the non-zero eigenvalues $\lambda_k$ of the graph Laplacian $L$ with $q_{k}$ to be the $k^{th}$ eigenvector that corresponds to the  $\lambda_k$ eigenvalue.
Also, $\Lambda_{\alpha,\alpha}^{(k)}$ as in \eqref{eq: kappaalpha}, $\Lambda_{\alpha,\alpha}$ as in \eqref{eq: kappaalphasum} and $G_\alpha$ as in \eqref{eq: gfunction}.
\end{theorem}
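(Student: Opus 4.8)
The plan is to reduce the whole statement to a single scalar estimate of $\int_0^\infty |f_{ij}(t)|^\alpha\,dt$ and then sum over $i,j$ using the orthogonality identity $\sum_{i,j}|q_{ik}|^\alpha|q_{jk}|^\alpha=\|q_k\|_\alpha^{2\alpha}$. The starting point, common to both ranges of $\alpha$, is a centering decomposition of $f_{ij}$. Since $q_1=\frac{1}{\sqrt n}\mathbf 1$ and $Q$ is orthogonal, $\sum_{k=2}^n q_{ik}q_{jk}=\delta_{ij}-\frac1n$; writing $e^{-\lambda_k t}=\frac{g(t)}{n-1}+\frac{1}{n-1}\sum_{m=2}^n\big(e^{-\lambda_k t}-e^{-\lambda_m t}\big)$ and substituting into \eqref{eq: fij} gives
\[
f_{ij}(t)=\frac{g(t)}{n-1}\Big(\delta_{ij}-\tfrac1n\Big)+\frac{1}{n-1}\sum_{k=2}^n q_{ik}q_{jk}\sum_{m=2}^n\big(e^{-\lambda_k t}-e^{-\lambda_m t}\big).
\]
The first summand is the ``$g$-part'' that produces the $G_\alpha$ terms, and the second is the ``difference part'' that produces the spectral sums $\Lambda_{\alpha,p}^{(k)}$.

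Two elementary facts drive the rest. First, for $\lambda_m\le\lambda_k$ one has $0\le e^{-\lambda_m t}-e^{-\lambda_k t}\le(\lambda_k-\lambda_m)t\,e^{-\lambda_m t}$ (from $1-e^{-x}\le x$), so each pairwise difference is controlled by a single decaying monomial. Second, $\int_0^\infty t^\alpha e^{-\alpha\lambda t}\,dt=\Gamma(\alpha+1)/(\alpha\lambda)^{\alpha+1}$. Combining them, $\int_0^\infty|e^{-\lambda_k t}-e^{-\lambda_m t}|^\alpha\,dt\le(\lambda_k-\lambda_m)^\alpha\,\Gamma(\alpha+1)/(\alpha\lambda_m)^{\alpha+1}$ for $m<k$ (symmetrically for $m>k$); summing over $m$ reproduces exactly the bracket defining $\Lambda_{\alpha,1}^{(k)}$, while the $L^\alpha$-norm version (take the $\alpha$-th root of the per-pair integral) reassembles $\Lambda_{\alpha,\alpha}^{(k)}$.

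For $\alpha\in(0,1]$ I would invoke subadditivity $|\sum_k a_k|^\alpha\le\sum_k|a_k|^\alpha$ throughout. Applied to the two-term decomposition and then to both inner sums, the $g$-part contributes $\int_0^\infty(g/(n-1))^\alpha\,dt$ weighted by $\sum_{i,j}|\delta_{ij}-\frac1n|^\alpha=(n-1)\big(1+(n-1)^{\alpha-1}\big)/n^{\alpha-1}$, which collapses precisely to $c_2\,G_\alpha$; the difference part contributes $\frac{1}{\alpha(n-1)^\alpha}\sum_k\|q_k\|_\alpha^{2\alpha}\int_0^\infty\sum_{m\ne k}|e^{-\lambda_k t}-e^{-\lambda_m t}|^\alpha\,dt\le c_1\sum_k\|q_k\|_\alpha^{2\alpha}\Lambda_{\alpha,1}^{(k)}$ via the pairwise bound above. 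This yields the first claim.

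For $\alpha\in[1,2]$ subadditivity fails and I would work in $L^\alpha$ norm. The first estimate uses convexity $|A+B|^\alpha\le2^{\alpha-1}(|A|^\alpha+|B|^\alpha)$ on the decomposition: the $g$-part gives $2^{\alpha-1}c_2=d_2$ times $G_\alpha$, while for the difference part Minkowski's inequality yields $\|B\|_{L^\alpha}\le\frac{1}{n-1}\sum_k|q_{ik}q_{jk}|\,\Lambda_{\alpha,\alpha}^{(k)}$. The crux is then to pass from this sum raised to the $\alpha$-th power back to a linear combination of $\|q_k\|_\alpha^{2\alpha}\Lambda_{\alpha,\alpha}^{(k)}$: I would apply a weighted Jensen inequality, $(\sum_k a_k)^\alpha\le(\sum_k w_k)^{\alpha-1}\sum_k a_k^\alpha w_k^{1-\alpha}$, with the self-consistent choice $w_k=\Lambda_{\alpha,\alpha}^{(k)}$ and $a_k=|q_{ik}q_{jk}|\Lambda_{\alpha,\alpha}^{(k)}$, which produces exactly the factor $\Lambda_{\alpha,\alpha}^{\alpha-1}=(\sum_k w_k)^{\alpha-1}$ and, after summing over $i,j$, the term $d_1\Lambda_{\alpha,\alpha}^{\alpha-1}\sum_k\|q_k\|_\alpha^{2\alpha}\Lambda_{\alpha,\alpha}^{(k)}$. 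The second member of the minimum comes from alternative bookkeeping: applying Minkowski to the full decomposition, $\|f_{ij}\|_{L^\alpha}\le\|A\|_{L^\alpha}+\|B\|_{L^\alpha}$, followed by $(x+y)^\alpha\le x^\alpha+\alpha(x+y)^{\alpha-1}y$, keeps coefficient one on the difference part (hence $d_3$ without the $2^{\alpha-1}$) at the cost of the larger multiple $d_4$ of $G_\alpha$; taking the smaller of the two gives the stated minimum. I expect the weighted-Jensen step to be the main obstacle, both in choosing the weights so that the spectral data reassembles into $\Lambda_{\alpha,\alpha}^{\alpha-1}\sum_k\|q_k\|_\alpha^{2\alpha}\Lambda_{\alpha,\alpha}^{(k)}$ rather than a looser quantity, and in checking that the two splittings genuinely deliver the competing constant pairs $(d_1,d_2)$ and $(d_3,d_4)$.
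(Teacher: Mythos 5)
Your route is essentially the paper's own. The decomposition you start from is exactly the splitting $f_{ij}=W_{i,j,n}+\frac{g}{n-1}\big(\delta_{ij}-\frac{1}{n}\big)$ on which the proofs of Claims~\ref{lem: sigmaijestimates} and~\ref{lem: sigmaijestimates2} rest; your $\alpha\in(0,1]$ argument (termwise subadditivity, $1-e^{-x}\le x$, the Gamma integral) is Claim~\ref{lem: sigmaijestimates} almost verbatim and your constants $c_1,c_2$ come out right; and your first branch for $\alpha\in[1,2]$ (Minkowski in $L^\alpha$, the $2^{\alpha-1}$ convexity bound, then the weighted Jensen step) reproduces the first alternative of Claim~\ref{lem: sigmaijestimates2}, your ``self-consistent'' Jensen weights being precisely Lemma~\ref{lem: jensen} applied with $b_k=\Lambda_{\alpha,\alpha}^{(k)}$ and $y_k=|q_{ik}q_{jk}|$, so $(d_1,d_2)$ is fully justified.

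The genuine gap is the second branch of the minimum, i.e.\ the pair $(d_3,d_4)$, and the obstacle you flag there is real, not a bookkeeping nuisance. With the correct convexity inequality $(x+y)^\alpha\le x^\alpha+\alpha(x+y)^{\alpha-1}y$ the cross term carries the $g$-part to the \emph{first} power: after Jensen it is of order $\Lambda_{\alpha,\alpha}^{\alpha-1}G_\alpha^{1/\alpha}$, which is not a multiple of $G_\alpha$ and is not even dominated by $(1+\alpha\Lambda_{\alpha,\alpha}^{\alpha-1})G_\alpha$ uniformly over spectra (take $\lambda_2/\lambda_n\to 0$ with $\lambda_2\to\infty$, e.g.\ $\lambda_2=\sqrt{M}$, $\lambda_n=M$, $M\to\infty$: the cross term then exceeds $d_4G_\alpha$ by an unbounded factor); trading it away by Young's inequality instead inflates the coefficient of the spectral term from $1$ to $\alpha\ge 2^{\alpha-1}$, destroying exactly the advantage $d_3<d_1$ that this branch is supposed to buy. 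You should know this is also where the paper itself stumbles: its proof of the second alternative of Claim~\ref{lem: sigmaijestimates2} invokes Lemma~\ref{lem: ineq} in the stated form $|u+v|^p\le|u|^p+|v|^p+p|v|^{p-1}|u|^p$, which is false as written (take $p=3/2$, $v=1$, $u\to 0^+$: the left side is $1+pu+O(u^2)$, the right side $1+(1+p)u^p$); what the proof of Lemma~\ref{lem: ineq} actually establishes has cross term $p|v|^{p-1}|u|$, and with that correct version the paper's derivation hits the same non-absorbable $G_\alpha^{1/\alpha}$ term you do. (There is also a mismatch between Claim~\ref{lem: sigmaijestimates2} and the stated $d_4$: summing the claim over $i,j$ produces a factor $1/\alpha$ where $d_4$ has $(n-1)^{\alpha}$.) In short: your proposal is correct, and coincides with the paper's proof, for the $(0,1]$ estimate and the $(d_1,d_2)$ branch; the $(d_3,d_4)$ branch is proved neither by your sketch nor, in fact, by the paper.
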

\begin{proof} From Definition \ref{defn: vectorscale} 
\begin{equation}\label{eq: sigmaaexpansion}
\Sigma_\alpha=\sum_{i=1}^n \sigma_i^\alpha=\sum_{i=1}^n \sum_{j=1}^n \sigma_{ij}^\alpha=\sum_{j\neq i=1}^n\sigma_{ij}^\alpha+\sum_{i=1}^n\sigma_{ii}^\alpha
\end{equation} as it occurs from Proposition \ref{prop: propstrv} and Theorem \ref{prop: ydist}. The following Claims are central estimates of $\sigma_{ij}^\alpha$ for $\alpha \in (0,1]$ and 
$\alpha \in [1,2]$ respectively. Their proof is put in the Appendix.

\noindent We begin with the case $\alpha\in (0,1]$. 
\begin{claim}\label{lem: sigmaijestimates} If $\alpha\in (0,1]$, the following estimates hold true:
\begin{equation*}
\sigma_{ij}^{\alpha}\leq \begin{cases} c_1\sum_{k}|q_{ik}|^\alpha |q_{jk}|^{\alpha}\Lambda_{\alpha,1}(k)+\overline{c}_2 \,G_\alpha,& i\neq j \\ 
 \\
c_1\sum_{k}|q_{ik}|^{2\alpha}\Lambda_{\alpha,1}(k)+\underline{c}_2\,G_\alpha , & i=j
\end{cases}
\end{equation*} where $\overline{c}_2=\frac{1}{n^\alpha \alpha (n-1)^{\alpha}}$ and $\underline{c}_2=\frac{1}{\alpha n^{\alpha}}$.
\end{claim}
\noindent the first part of the result follows by direct application of the bounds of $\sigma_{ij}^\alpha$ of Claim \ref{lem: sigmaijestimates} in \eqref{eq: sigmaaexpansion}. We continue with the case $\alpha \in [1,2)$. We make a similar claim on upper bounds of $\sigma_{ij}^\alpha$.
\begin{claim}\label{lem: sigmaijestimates2} If $\alpha \in [1,2]$ then, for $i\neq j$, either
\begin{equation*}
\sigma_{ij}^\alpha\leq  \frac{2^{\alpha-1}}{\alpha(n-1)^\alpha}\bigg[\Lambda_{\alpha,\alpha}^{\alpha-1}\sum_k |q_{ik}|^\alpha |q_{jk}|^\alpha \Lambda_{\alpha,\alpha}(k)+\frac{G_\alpha}{n^\alpha}\bigg]
\end{equation*} 
or 
\begin{equation*}\begin{split}
\sigma_{ij}^\alpha\leq & \frac{1}{\alpha(n-1)^\alpha}\bigg[\Lambda_{\alpha,\alpha}^{\alpha-1}\sum_k |q_{ik}|^\alpha |q_{jk}|^\alpha \Lambda_{\alpha,\alpha}(k)+\\
&\hspace{1.8in}+\frac{G_\alpha}{n^\alpha}\big(1+\alpha \Lambda_{\alpha,\alpha}^{\alpha-1}\big)\bigg]
\end{split}
\end{equation*} Also, for $i=j$, either

\begin{equation*}
\sigma_{ii}^\alpha\leq  \frac{2^{\alpha-1}}{\alpha(n-1)^\alpha}\bigg[\Lambda_{\alpha,\alpha}^{\alpha-1}\sum_k |q_{ik}|^{2\alpha} \Lambda_{\alpha,\alpha}(k)+\frac{(n-1)^\alpha}{n^\alpha}G_\alpha\bigg]
\end{equation*} 
or
\begin{equation*}\begin{split}
\sigma_{ii}^\alpha&\leq \frac{1}{\alpha(n-1)^\alpha}\bigg[\Lambda_{\alpha,\alpha}^{\alpha-1}\sum_k |q_{ik}|^{2\alpha} \Lambda_{\alpha,\alpha}(k)\\
&\hspace{1.4in}+\frac{(n-1)^\alpha}{n^\alpha} G_\alpha\big(1+\alpha \Lambda_{\alpha,\alpha}^{\alpha-1}\big)\bigg]
\end{split}
\end{equation*} 
\end{claim}
\noindent The second part of the result follows in a similar manner to the first. 
\end{proof}

A worth mentioning technical remark that occurs from Theorem \ref{thm: main1} is the technical distinction between estimates obtained with noise sources for finite first moments, i.e. $\alpha \in (1,2]$, and estimates for noise with infinite first moment, i.e. $\alpha \in (0,1]$. In either case, bounds are generally constituted  of two terms: The first term equals the weighted sum of the $\alpha$-norm of the $n-1$ eigenvectors of $L$. The weight of the $k^{th}$ term in this sum is an eigenvalue-based function that essentially measures the deviation of the $k^{th}$ eigenvalue with respect to the rest $n-2$. The second term effectively involves the sum of the inverse non-zero eigenvalues of $L$ that it is expressed in integral form. One can sacrifice additional sharpness and use the simple bound $G_\alpha \leq \frac{(n-1)}{\lambda_2}$.

We remark that $\lambda_2 \uparrow \lambda_n$ implies $\Lambda_{\alpha,\alpha} \downarrow 0$. The estimates of Theorem \ref{thm: main1} coincide with the exact value of $\Sigma_\alpha(\overline{y})$ in \eqref{eq: completescale}. However, for $\alpha=2$, the estimates in Theorem \ref{thm: main1} do not match with the value in \eqref{eq: h2norm}. This non-negligible discrepancy motivates the additional upper bound of $\Sigma_{\alpha}(\overline{y})$.
\subsection{Estimates near $\alpha=2$.}
\noindent Together with Theorem \ref{thm: main1} we propose a different, yet particularly simple approach, in establishing estimates of $\Sigma_\alpha$, via a harmless perturbation of the scale parameter from the Gaussian case $\alpha=2$.
\vspace{0.1in}
\begin{theorem}\label{thm: main2} Assume network \eqref{eq: model} with Assumptions \ref{assum0} and \ref{assum: noise} to hold and the stability parameter $\alpha \in (1,2]$. Let the output vector-valued process $y=\{y_t,~t\geq 0\}$ as in \eqref{eq: outputdynamics}. Then,
\begin{equation*}\begin{split}
&\Sigma_\alpha(\overline{y})\leq \frac{1}{\alpha}\sum_{k=2}^{n}\frac{1}{2\lambda_k} +\frac{1}{\alpha}\int_{\alpha}^2 \int_{0}^\infty  n^{2-w} g^{w}(s)\big|\ln g(s)\big|\,dsdw.
\end{split}
\end{equation*} where $g(t)$ is as in \eqref{eq: zeta}.
\end{theorem}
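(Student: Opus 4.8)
The plan is to treat the stability index as a continuous parameter and perturb away from the Gaussian value $w=2$, where the aggregate scale is known in closed form. Write $\Sigma_\alpha(\overline y)=\tfrac1\alpha S(\alpha)$ with $S(w):=\sum_{i,j}\int_0^\infty |f_{ij}(t)|^{w}\,dt$; then \eqref{eq: h2norm} gives $S(2)=\sum_{k=2}^n \tfrac{1}{2\lambda_k}$, which supplies the first term of the bound. Since $|f_{ij}(t)|<1$ for all $t$ (established inside the proof of Proposition~\ref{prop: monotonicity} via Cauchy--Schwarz on the normalized eigenvectors), the elementary identity
\begin{equation*}
|f_{ij}(t)|^{\alpha}=|f_{ij}(t)|^{2}+\int_{\alpha}^{2}|f_{ij}(t)|^{w}\,\big|\ln|f_{ij}(t)|\big|\,dw
\end{equation*}
holds with a nonnegative integrand. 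Summing over $i,j$, integrating in $t$, and exchanging the order of integration (Tonelli applies, as everything is nonnegative) yields
\begin{equation*}
S(\alpha)=S(2)+\int_{\alpha}^{2} T(w)\,dw,\qquad T(w):=\sum_{i,j}\int_{0}^{\infty}|f_{ij}(t)|^{w}\big|\ln|f_{ij}(t)|\big|\,dt .
\end{equation*}
Dividing by $\alpha$ reduces the theorem to the estimate $T(w)\le \int_0^\infty n^{2-w}g^{w}(s)\,|\ln g(s)|\,ds$ for each $w\in[\alpha,2]$.

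To bound $T(w)$ I would aim for the pointwise-in-$t$ inequality $\sum_{i,j}|f_{ij}(t)|^{w}\big|\ln|f_{ij}(t)|\big|\le n^{2-w}g^{w}(t)|\ln g(t)|$, fed by three ingredients. First, regarding $\{f_{ij}(t)\}_{i,j}$ as a vector in $\R^{n^2}$, the norm-equivalence inequality of Section~\ref{section: prelim} with exponents $w<2$ gives $\sum_{i,j}|f_{ij}(t)|^{w}\le n^{2-w}\big(\sum_{i,j}f_{ij}^2(t)\big)^{w/2}$, producing exactly the prefactor $n^{2-w}$. Second, orthonormality of $Q$ gives the Parseval-type identity $\sum_{i,j}f_{ij}^2(t)=\sum_{k=2}^n e^{-2\lambda_k t}\le g^2(t)$, collapsing the power factor to $g^{w}(t)$. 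Third, the bound $|f_{ij}(t)|\le g(t)$ noted after \eqref{eq: gfunction} controls the magnitude of each entry.

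The hard part will be the logarithmic weight $\big|\ln|f_{ij}|\big|$, because $x\mapsto x^{w}|\ln x|$ is \emph{not} monotone on $(0,1)$ (it peaks at $x=e^{-1/w}$), so one can neither replace $|f_{ij}|$ by $g$ inside the logarithm nor dominate each summand individually by $g^{w}|\ln g|$. I would attack it by factoring out $Z(t):=\sum_{i,j}|f_{ij}(t)|^{w}$ and rewriting the integrand of $T$ as $\tfrac{Z}{w}\big(H(p)-\ln Z\big)$, where $p_{ij}=|f_{ij}|^{w}/Z$ is a probability vector and $H(p)$ its Shannon entropy; the maximal-entropy bound $H(p)\le \ln(n^{2})$, the estimate $Z\le n^{2-w}g^{w}$, and the monotonicity of $z\mapsto z(c-\ln z)$ on the admissible range then drive the argument after inserting the extremal value of $Z$. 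The genuinely delicate point is to match the logarithm to $|\ln g|$ rather than the looser $\ln(n/g)$ that a naive entropy estimate produces; closing this gap appears to require exploiting the additional spectral structure (e.g.\ the trace identity $\sum_i f_{ii}=g$ alongside $\sum_{i,j}f_{ij}^2=\sum_k e^{-2\lambda_k t}$) beyond the bare $\ell^2$ constraint. Finally I would verify integrability of $g^{w}|\ln g|$ — immediate since $g(t)\asymp e^{-\lambda_2 t}$ as $t\to\infty$ and $g$ stays bounded near $t=0$ — which both guarantees finiteness of the bound and legitimizes the Tonelli exchange used above.
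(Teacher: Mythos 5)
Your first half coincides, in substance, with the paper's own proof. The identity $|f_{ij}|^\alpha=|f_{ij}|^2+\int_\alpha^2|f_{ij}(t)|^w\big|\ln|f_{ij}(t)|\big|\,dw$ (valid since $|f_{ij}|<1$), the Tonelli exchange, and the evaluation $\sum_{i,j}\int_0^\infty f_{ij}^2(s)\,ds=\sum_{k\geq 2}\tfrac{1}{2\lambda_k}$ are precisely the paper's decomposition $\sigma_{ij}^\alpha=\tfrac{1}{\alpha}A_{ij}+\tfrac{1}{\alpha}B_{ij}$, with the $A$-term handled the same way (you quote \eqref{eq: h2norm}; the paper re-derives it from orthonormality). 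The gap is in the $B$-term, i.e. exactly the part you label ``the hard part.'' Your plan reduces the theorem to the pointwise-in-$t$ inequality $\sum_{i,j}|f_{ij}(t)|^{w}\big|\ln|f_{ij}(t)|\big|\leq n^{2-w}g^{w}(t)|\ln g(t)|$, but this inequality is \emph{false}, so no refinement of the entropy argument can establish it: since $g$ is continuous, strictly decreasing, with $g(0)=n-1\geq 1$ and $g\to 0$, there is $t_0\geq 0$ with $g(t_0)=1$; there the right-hand side vanishes, while the left-hand side is strictly positive because $\sum_{i,j}f_{ij}^2(t_0)=\sum_{k\geq2}e^{-2\lambda_k t_0}>0$ and every nonzero $|f_{ij}(t_0)|$ is strictly below $1$. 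By continuity the inequality fails on a set of positive measure around $t_0$. Your entropy sketch is, by your own admission, not closed, so the proposal proves only the first term of the bound, not the theorem.

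For comparison, the paper does not pass through the aggregate quantity $\sum_{i,j}|f_{ij}|^w$ at all. It applies Lemma \ref{lem: jensen} with weights $b_k=e^{-\lambda_k s}$ to get $|f_{ij}(s)|^{w}\leq g^{w-1}(s)\sum_{k\geq2}|q_{ik}|^w|q_{jk}|^we^{-\lambda_k s}$ (keeping one exponential factor inside the sum), bounds the logarithmic factor by $|\ln g(s)|$, and then sums over $i,j$ using $\|q_k\|_w^{2w}\leq n^{2-w}$ and $\sum_{k\geq2}e^{-\lambda_k s}=g(s)$, which produces the stated double integral. It is worth saying plainly that the paper's logarithmic step is the very substitution you refused to make: from $|f_{ij}(s)|\leq g(s)$ alone one cannot conclude $\big|\ln|f_{ij}(s)|\big|\leq|\ln g(s)|$ --- when $g(s)\leq 1$ the inequality reverses, and when $g(s)>1$ it fails whenever $|f_{ij}(s)|$ is small --- so your diagnosis pinpoints a genuine soft spot in the paper's own argument. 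But identifying the obstruction is not the same as resolving it: as written, your proposal is incomplete at the decisive step, and the specific pointwise reduction you propose to close it cannot succeed.
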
 
\begin{proof}Rewrite $\sigma_{ij}^\alpha$ as a harmless perturbation of $\frac{1}{\alpha}\int_0^\infty |f_{ij}(s)|^2\,ds$, as follows:
\begin{equation*}\begin{split}
&\sigma_{ij}^{\alpha}=\frac{1}{\alpha}\int_{0}^\infty |f_{ij}(s)|^\alpha\,ds\\
&=\frac{1}{\alpha}\int_{0}^\infty |f_{ij}(s)|^2 \,ds+\frac{1}{\alpha}\int_{0}^\infty\int_{2}^\alpha \ln(|f_{ij}(s)|)|f_{ij}(s)|^{w} \,dwds\\
&=\frac{1}{\alpha} A_{ij}+\frac{1}{\alpha} B_{ij}
\end{split}
\end{equation*}
The integrand in $A_{ij}$ reads \begin{equation*}
\sum_{k=2}^{n}q_{ik}^2 e^{-2\lambda_k s}+\sum_{k_1\neq k_2=2}^{n}q_{ik_1}q_{jk_1}q_{ik_2}q_{jk_2} e^{-(\lambda_{k_1}+\lambda_{k_2}) s}
\end{equation*}
so that summing over $j$ \begin{equation}\label{eq: I1}\sum_{k=2}^{n}q_{ik}^2 e^{-2\lambda_k s},\end{equation} from the eigenvectors property $\sum_{j=1}^{n}q_{jk}\equiv 0~\forall~k>1$. We proceed with $B_{ij}$. 
From the convexity of $r(t)=|t|^w$ for every $w\in [\alpha,2]$, Lemma \ref{lem: jensen} yields
\begin{equation*}
|f_{ij}(s)|^{w}\leq g^{w-1}(s)\sum_{k=2}^{n} |q_{ik}|^w |q_{jk}|^w e^{-\lambda_k s}
\end{equation*} and $|f_{ij}(s)|\leq g(s) \Rightarrow |\ln(|f_{ij}(s)|)|\leq \big| \ln g(s)\big|$  where $g(t)=\sum_{k=2}^n e^{-\lambda_k t}$.\begin{equation*}\begin{split}
B_{ij}&\leq \sum_{k=2}^{n}\int_{0}^\infty\int_{\alpha}^2 |q_{ik}|^w |q_{jk}|^w  g^{w-1}(s)\big |\ln g(s)\big|e^{-\lambda_k s} \,dwds\\
\end{split}
\end{equation*}
Taking the double sum over $i$ and $j$, in view of $\sum_{i}q_{ik}^2\equiv 1$,
\begin{equation*}\begin{split}
&\Sigma_\alpha=\sum_{i,j}\sigma_{ij}^\alpha \leq \frac{1}{\alpha}\sum_{k=2}^{n}\frac{1}{2\lambda_k} +\\
&+\frac{1}{\alpha}\sum_{k=2}^n\int_{0}^\infty  \int_{\alpha}^2 \| q_k\|_{w}^{2w} g^{w-1}(s)\big|\ln g(s)\big|e^{-\lambda_k s}\,dwds
\end{split}
\end{equation*} 
The result follows in view of the norm estimate $$\|q_k\|_{w}\leq n^{\frac{1}{w}-\frac{1}{2}}\|q_k\|_2=n^{\frac{1}{w}-\frac{1}{2}}$$
\end{proof}
\noindent The upper bound above, although true for $\alpha \in (1,2]$, it is not expected to provide efficient estimates for values of $\alpha$ very far away from 2, due mainly to the way it was obtained.

Spectral based estimates such as these of Theorem \ref{thm: main1} and Theorem \ref{thm: main2}, could possibly be leveraged when developing optimal design algorithms, that reform the communication parameters into a network, more robust to the imposed noise. In order to verify the qualification of the estimates we must validate their efficiency on different network topologies. This is in part the subject of \S \ref{sect: examples}. 
\subsection{Connection with the $p^{th}$-moment, for $p<\alpha$.}
Theorem \ref{prop: ydist}  asserts that the output dynamics are stable vectors with the same stability parameter as the noise sources. Following Property 3 of Proposition \ref{prop: propstrv}, the distribution attains moments up to any $p<\alpha$, for $\alpha <2$. In particular, 
\begin{equation}\label{eq: hpnorm}
\mathbb E\big[ \|y\|_p^p \big]=c^p(\alpha,\beta,p) \|\sigma\|_p^p
\end{equation} 
When $y$ is Gaussian (i.e. $\alpha=2$), $\mathbb E\big[ \|y\|_p^p \big]$ exists for $p\leq \alpha$.  As Proposition \ref{prop: propstrv} explains, for the non-Gaussian range of $\alpha$, the $p^{th}$ moments diverge at $p=\alpha$. It is thus unreasonable to try to obtain $\mathcal H_2$-norm based measures of performance for heavy-tailed consensus systems. This is why cumulative scale $\Sigma_\alpha(\overline{y})$ may be regarded as extension of the classic input/output $\mathcal H_2$ performance measure. Indeed, for $\alpha=2$, and $p=2$, the statistics of $\overline{y}$ recover the well-known formula \eqref{eq: h2norm}.

We conclude this section with reporting the relation of $\mathbb E\big[ \|y\|_p^p \big]$ and $\Sigma_\alpha$, through the basic equivalence properties of Euclidean norms. Straightforward calculations yield for $p<a$,
\begin{equation*}
c^p  \sqrt[p / \alpha]{\Sigma_{\alpha}(\overline{y})} \leq \mathbb E[ \| \overline{y} \|_p^p ] \leq n^{1-\frac{p}{\alpha}} c^p \sqrt[p / \alpha]{\Sigma_{\alpha}(\overline{y})}
\end{equation*} where $c=c(\alpha,\beta,p)$ is the constant in Proposition \ref{prop: propstrv}. Interestingly enough, the double inequality becomes exact at $\alpha=2$ and in the limit $p\rightarrow 2^{-}$.

\section{Numerical Examples}\label{sect: examples} In this section, we discuss four examples related to output \eqref{eq: outputdynamics}. The first three, regard elementary network design problems.  Their objective is to demonstrate that the basic design strategies (addition/removal of links and re-weighting) are  critically affected by parameter $\alpha$ of input noise. The fourth example is a validation of the estimates in Theorems \ref{thm: main1} and \ref{thm: main2}. Our focus is on consensus systems driven by symmetric $\alpha$-stable noise (i.e. $\beta=0$ and $\mu=0$).

\begin{figure*}[h]
\begin{tikzpicture}
  [scale=.6,auto=left,every node/.style={circle,fill=blue!20}]
  \node (n2) at (8,12)  {2};
  \node (n4) at (5,12)  {4};
  \node (n5) at (3,13)  {5};
  \node (n1) at (9,10)  {1};
  \node (n6) at (8,14)  {6};
  \node (n3) at (12,9)  {3};
  \node (n8) at (5,5) [fill=none] {$\boldsymbol{\mathcal G_1}$};
  \foreach \from/\to in {n2/n3,n2/n6,n3/n5,n5/n2}
   \path (\from) edge  [bend left] (\to) [thick];
\foreach \from/\to in {n3/n6}  
\path (\from) edge  [bend right] (\to) [thick];
\foreach \from/\to in {n2/n4,n1/n2}  
   \draw (\from)--(\to) [thick] ;
   \foreach \from/\to in {n3/n1}
   \foreach \from/\to in {n1/n4}   
\draw (\from) edge [bend left](\to) [blue,dashed,thick];
   \foreach \from/\to in {n1/n3}   
\draw (\from) --(\to) [red,dashed,thick];
   \foreach \from/\to in {n3/n4}   
\draw (\from) edge [bend left](\to) [red,dashed,thick];
\end{tikzpicture}\hspace{0.3in}
\begin{tikzpicture}
  [scale=.6,auto=left,every node/.style={circle,fill=blue!20}]
  \node (n1) at (1,10) {1};
  \node (n2) at (3,10)  {2};
  \node (n3) at (5,10)  {3};
  \node (n4) at (7,10) {4};
  \node (n5) at (9,10)  {5};
  \node (n6) at (1,5)  {6};
  \node (n7) at (3,5)  {7};
  \node (n8) at (5,5)  {8};
  \node (n9) at (7,5)  {9};
  \node (n10) at(9,5)  {10};
  
  \node (n12) at (5,1) [fill=none] {$\boldsymbol{\mathcal G_2}$};
  \foreach \from/\to in {n1/n2,n1/n3,n1/n4}
  \path (\from) edge  [bend left] (\to)[thick];
\foreach \from/\to in {n1/n6,n1/n9}  
   \draw (\from)--(\to)[thick];
     \foreach \from/\to in {n2/n4}
  \path (\from) edge  [bend left] (\to)[blue, dashed,very thick];
\foreach \from/\to in {n2/n8,n2/n9,n2/n10}  
   \draw (\from)--(\to)[thick];
   \foreach \from/\to in {n2/n6}  
\draw (\from)--(\to)[green,dashed,very thick];
        \foreach \from/\to in {n3/n4}
  \path (\from) edge  [bend left] (\to)[thick];
  \foreach \from/\to in {n3/n7,n3/n8}  
   \draw (\from)--(\to) [thick];
     \foreach \from/\to in {n4/n7,n4/n8,n4/n9,n4/n10}  
   \draw (\from)--(\to) [thick];
        \foreach \from/\to in {n5/n6,n5/n8,n5/n9,n5/n10}  
   \draw (\from)--(\to)[thick];
  \foreach \from/\to in {n6/n8,n6/n9,n6/n10}
  \path (\from) edge  [bend right] (\to)[thick];   
    \foreach \from/\to in {n7/n9,n7/n10}
  \path (\from) edge  [bend right] (\to)[thick]; 
      \foreach \from/\to in {n8/n10}
  \path (\from) edge  [bend right] (\to)[red,dashed,very thick];  
\end{tikzpicture} \hspace{0.3in}
\begin{tikzpicture}
  [scale=.6,auto=left,every node/.style={circle,fill=blue!20}]
  \node (n1) at (1,10)  {1};
  \node (n2) at (4,10)  {2};
  \node (n5) at (7,10)  {5};
  \node (n3) at (4,8) {3};
  \node (n4) at (7,8)  {4};
  \foreach \from/\to in {n1/n2,n2/n3,n2/n5,n5/n4}  
  \draw (\from)--(\to)[very thick];
  \draw (n1) -- (n2) node [midway,fill=none] {\textbf{1}} ;
  \draw (n2) -- (n5) node [midway,fill=none] {$a_{25}$} ;
  \draw (n2) -- (n3) node [midway,fill=none] {$a_{23}$} ;
   \draw (n4) -- (n5) node [midway,fill=none] {\textbf{1}} ;
    \node (n12) at (5,1) [fill=none] {$\boldsymbol{\mathcal G_3}$};
\end{tikzpicture}\caption{The graph topologies of Examples \ref{exmpl: ex2}, \ref{exmpl: sparsification} and \ref{exmpl: reweighting}, respectively.  } 
\label{fig: graph}
\end{figure*}
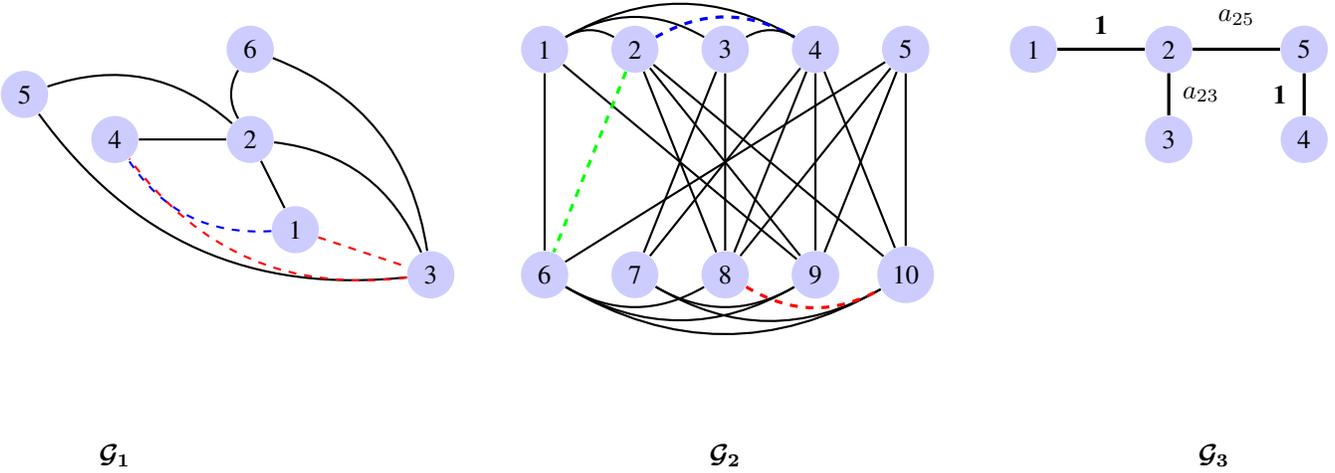 

\begin{exmp}\label{exmpl: ex2}[Design via Expansion] We consider a network over $n=6$ agents that seek consensus. The communication network, illustrated as $\boldsymbol{\mathcal G_1}$ in Figure \ref{fig: graph}, is a linear time-invariant with unit coupling links. The network is hit by stable noise forming the dynamics of \eqref{eq: model}. In this problem, we have the option to add a new unit-weight link to the network so as to improve its performance. In other words, we look for the link location, that upon establishing, $\Sigma_{\alpha}$ is minimized. Numerical explorations signify that the optimal selection is a function of $\alpha$.
For $\alpha=2$ to $\alpha=1.6655$ the optimal location is a link between nodes 1 and 4 (blue dotted curve). From $\alpha=1.6655$ to $\alpha=0.3312$ there appear to be two equivalent alternatives: one is the pair (1,3) and the other is (3,4) (red dashed curves). For stability values below $0.3313$ the optimal pair is (1,3).
\end{exmp}
\begin{exmp}\label{exmpl: sparsification}[Design via Sparsification] Next, we consider a dense linear network over 10 nodes. It is depicted as graph  $\boldsymbol{\mathcal G_2}$ in Figure \ref{fig: graph}. The working hypothesis is that the existence of too many links, makes for an expensive communication structure. The problem in this network is to choose the one link of the network that, upon removal, increases $\Sigma_\alpha$, the least. Our findings suggest that within the stability range $\alpha=2$ to $\alpha=1.8932$, the optimal pair is (2,4) (removal blue dashed curve). From $\alpha=1.8937$ to $\alpha=0.1971$ the optimal pair is (8,10) (removal of the red dashed curve). Finally, for $\alpha<0.1971$ the optimal pair appears to be (2,6) (removal of the green dashed curve).
\end{exmp}
\begin{exmp}\label{exmpl: reweighting}[Design via Re-weighting] In this last example, we regard a small network of 4 agents, illustrated as $\boldsymbol{\mathcal G_3}$ in Figure \ref{fig: graph}. All but links between nodes (2,3) and (2,5) are fixed and of unit weight. On the other hand, the edges $a_{23}$ and $a_{25}$ are assumed to satisfy $a_{23}=2-b, a_{25}=b$ for some $b\in (0,2) $. In other words, keeping the overall network budget constant and equal to $a_{12}+a_{23}+a_{25}+a_{45}=4$ we seek to calibrate the control parameter $b$ towards the value that minimizes $\Sigma_\alpha$. The simulations are illustrated in Figure \ref{fig: example3} where we essentially depict the dependence of the optimal calibration (the black dots) as a function of $\alpha$.
\begin{figure}[t]\center
\includegraphics[scale=0.6]{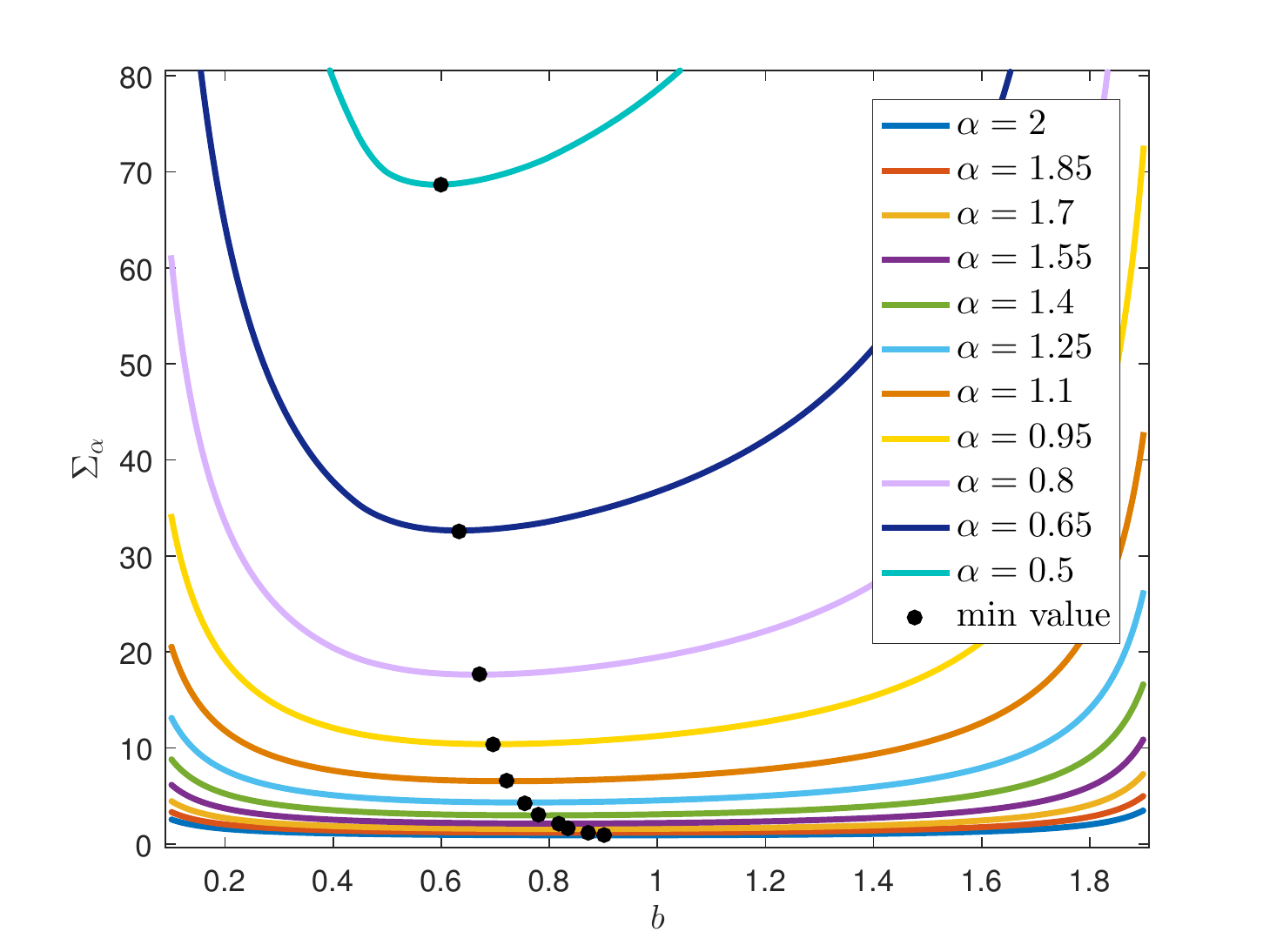}\caption{The cumulative scale parameter $\Sigma_{\alpha}$ as function of the control $b$, in Example \ref{exmpl: reweighting}. The different curves correspond to stable noises of various stability parameters. The lowest curve is this of the Gaussian case, $\alpha=2$. The $\Sigma_\alpha$ curves increase monotonically as $\alpha$ varies from 2 to 0, verifying Proposition \ref{prop: monotonicity}. The sequence of black dots signify the global minimum in each type of noise.}\label{fig: example3}
\end{figure}
\end{exmp}

All three network design problems lead to a definitive conclusion: performance evaluation tools that are associated with a particular type of stochastic uncertainty (e.g. the Gaussian and the associated $\mathcal H_2$ performance measure) become obsolete in other types of uncertainty (e.g. non-Gaussian cases).

\begin{exmp}\label{exmpl: estimates}
We test the scale estimates of Theorems \ref{thm: main1} and \ref{thm: main2}. We choose two graphs. The first graph has a significantly larger eigenvalue ratio than the second one. The curves are depicted in Figure \ref{fig: example3} and are compared with the exact value. There are generally two remarks due. The estimates perform better in graphs with ratio $\lambda_n/ \lambda_2$ close to 1. Also, as the noise distribution becomes more and more impulsive (smaller values of $\alpha$) the estimates becomes less and less efficient.
\begin{figure}[t]\center
\includegraphics[scale=0.54]{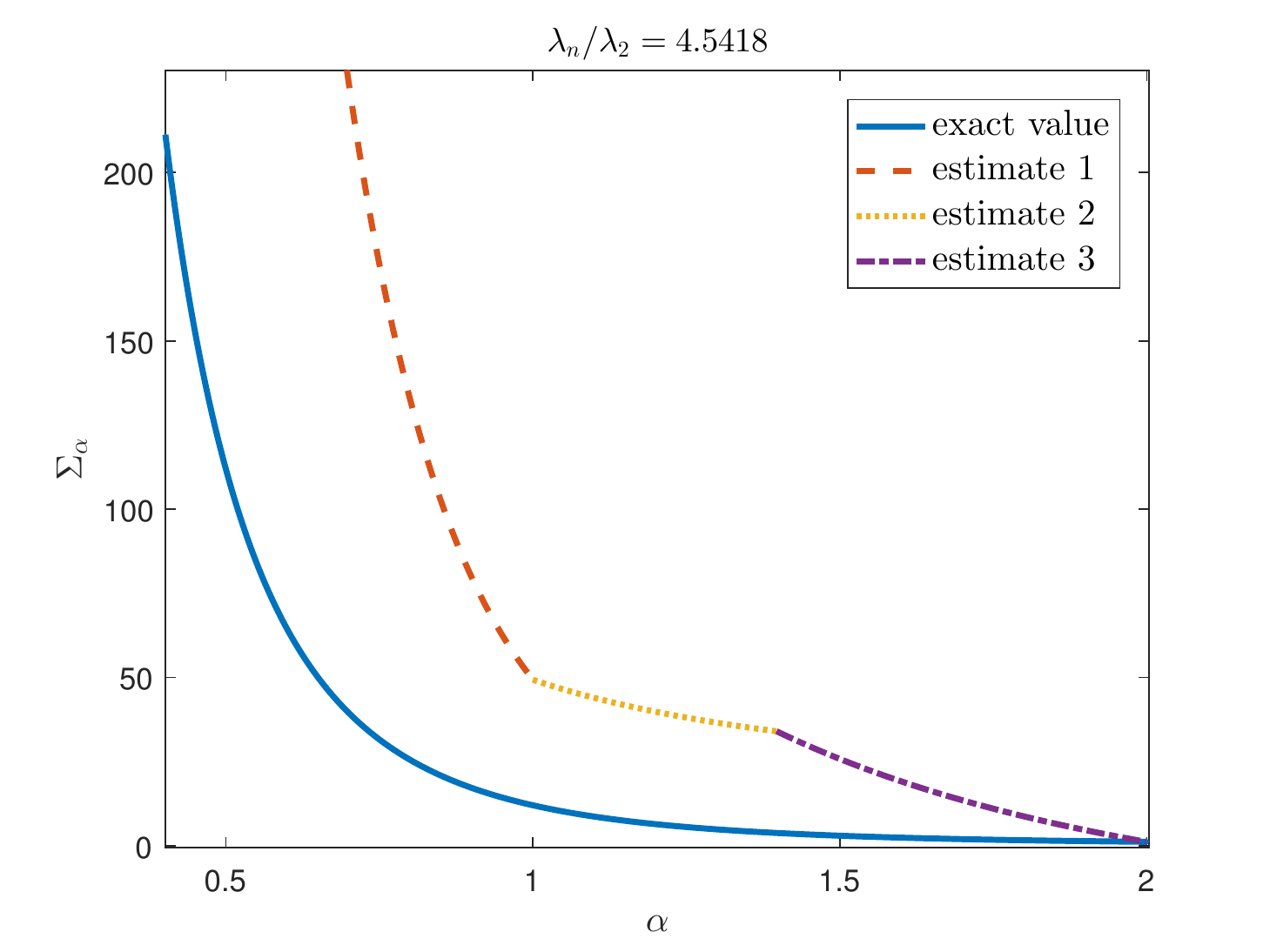} 
\includegraphics[scale=0.54]{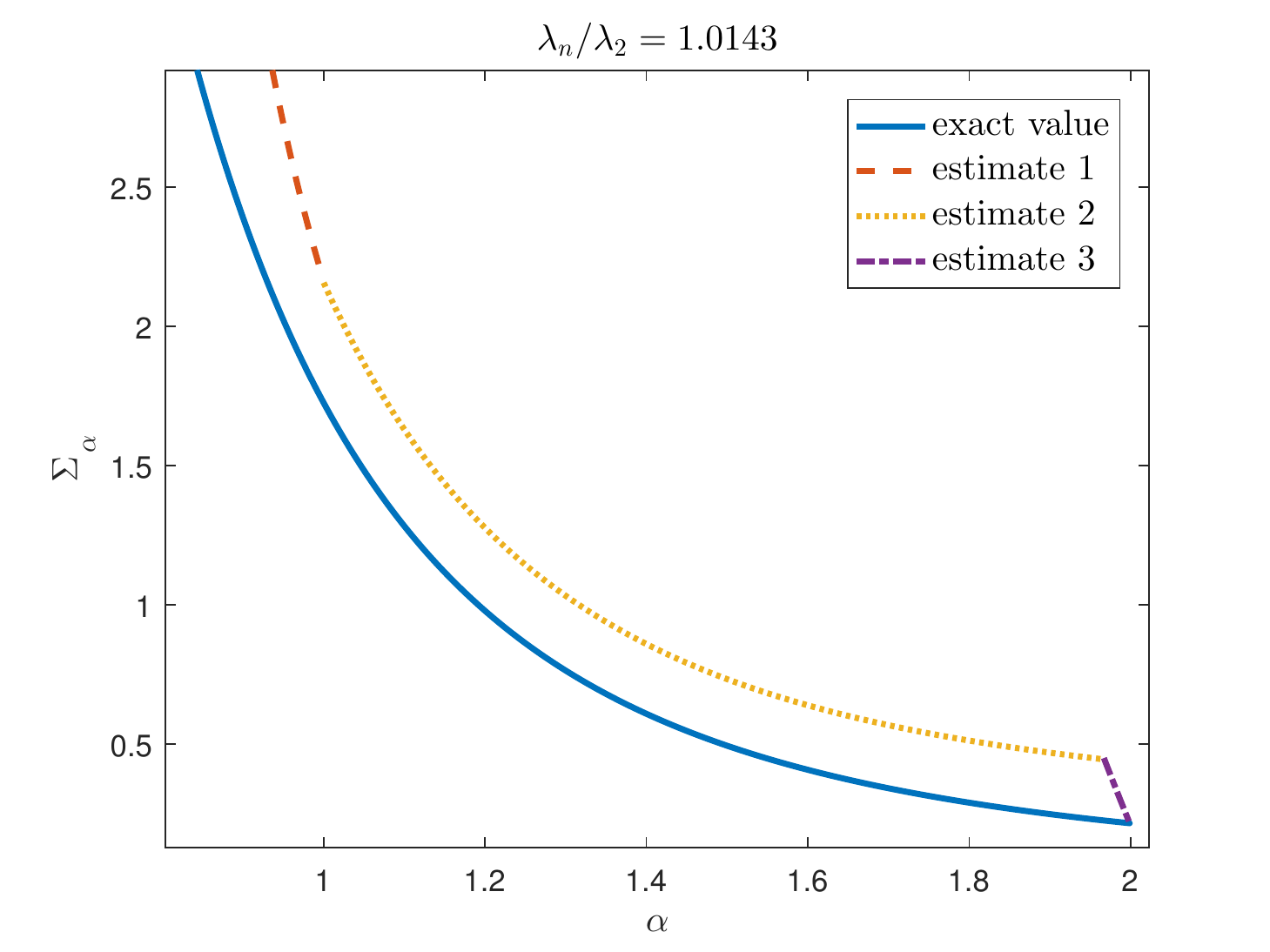}\caption{Simulation of Example \ref{exmpl: estimates}. Graphs with smaller $\frac{\lambda_n}{\lambda_2}$ ratio provide scale estimates closer to the actual value. Estimate 1, regards $\Sigma_\alpha$ with $\alpha\in (0,1]$. Estimate 2, regards $\Sigma_\alpha$ with $\alpha\in [1,2]$. Estimate 3, regards $\Sigma_\alpha$ with $\alpha\in [1,2]$ as in Theorem \ref{thm: main2}. }
\end{figure}
\end{exmp}

\section{DISCUSSION}

Modeling of uncertainty in networked control systems typically assumes noise sources generated by Brownian motion. Albeit popular, such perturbations are not rich enough to incorporate real-world uncertainties that incorporate impulsive shocks. In this paper, we considered consensus seeking systems in the presence of sources induced by heavy-tailed probability measures. 

We defined extensions of measures of performance that quantify systemic response in the presence of heavy-tailed noise. These were cumulative scale parameters of $\alpha$-stable vectors, that demonstrate close relations with the $p$-norms of output dynamics. It is argued that heavy-tailed performance measures may be regarded as generalization of $\mathcal H_2$-norm based measures of performance for linear systems with white noise inputs.  Unless certain types of networks or noise are assumed, explicity calculation of heavy-tailed performance measures is not possible. Our estimates perform quite well for types of networks with the property that the graph laplacian eigenvalues satisfy $\lambda_n/\lambda_2 \approx 1$. In addition to complete graph connectivity (where  $\lambda_n/\lambda_2 = 1$) expander graphs also satisfy ratio $\lambda_n/\lambda_2 $, \cite{spielman}. Finally, we presented simple network design examples on $\alpha$-stable consensus network where we demonstrate that any optimal synthesis strategy must take into account the shock-impulsiveness of infused noise.  

\appendices

\section*{Appendix}

We proceed with reviewing some fundamental inequalities related to the function $s(t)=|t|^p$ for $p>0$. These inequalities play a crucial role in the derivation of the technical results of our paper.

\vspace{0.1in}
\begin{lemma}\label{lem: ineq} Let $u,v\in \mathbb R$. If $0<p\leq 1$, then  $$|u+v|^{p}\leq |v|^p+|u|^p.$$ If $p\in (1,2)$, then $$|u+v|^p\leq \min\big\{ 2^{p-1}( |u|^p+|v|^p ), |u|^p+|v|^p +p\,|v|^{p-1}\,|u|^p\big\}.$$
\end{lemma}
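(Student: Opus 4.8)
The plan is to reduce every inequality to the case of nonnegative reals and then to a one-variable statement about $(a+b)^p$ versus $a^p+b^p$. First I would invoke the triangle inequality $|u+v|\le|u|+|v|$ together with the fact that $t\mapsto t^p$ is nondecreasing on $[0,\infty)$ to obtain $|u+v|^p\le(|u|+|v|)^p$. After this step it suffices to bound $(a+b)^p$ for $a=|u|\ge 0$ and $b=|v|\ge 0$, which strips away the absolute-value signs and the sign interaction between $u$ and $v$.

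For $0<p\le 1$ the claim reduces to the subadditivity $(a+b)^p\le a^p+b^p$. I would prove this by normalizing: assuming $a+b>0$ and setting $t=a/(a+b)\in[0,1]$, the inequality is equivalent to $t^p+(1-t)^p\ge 1$. Since $0<p\le1$ and both $t,1-t\in[0,1]$, the elementary bound $s^p\ge s$ for $s\in[0,1]$ (because $s^{p-1}\ge 1$ there) gives $t^p+(1-t)^p\ge t+(1-t)=1$, which closes this case.

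For $p\in(1,2)$ there are two bounds to establish. The factor-$2^{p-1}$ bound $|u+v|^p\le 2^{p-1}(|u|^p+|v|^p)$ follows directly from the convexity of $t\mapsto|t|^p$ (valid since $p\ge1$): midpoint convexity gives $|\tfrac{u+v}{2}|^p\le\tfrac12(|u|^p+|v|^p)$, and multiplying by $2^p$ finishes it. The second, additive bound is the delicate one and is where I expect the real work. Here I would fix $b\ge0$ and study $g(a)=(a+b)^p-a^p-b^p$ on $a\ge0$. Since $g(0)=0$, it is enough to control $g'(a)=p\big[(a+b)^{p-1}-a^{p-1}\big]$. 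The key observation is that the shifted exponent $p-1$ lies in $(0,1)$, so I can reuse the subadditivity already proven in the first case, now at exponent $p-1$, to conclude $(a+b)^{p-1}\le a^{p-1}+b^{p-1}$, hence $g'(a)\le p\,b^{p-1}$. Integrating from $0$ to $a$ yields $g(a)\le p\,b^{p-1}a$, i.e. $(a+b)^p\le a^p+b^p+p\,b^{p-1}a$, which after the reduction step produces the stated additive correction term; taking the minimum of the two bounds is then immediate.

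The main obstacle is precisely this additive bound for $p\in(1,2)$: unlike the convexity and subadditivity arguments, it does not drop out of a single monotonicity or convexity fact, and the clean route through it is the bootstrapping step that feeds the $p\le1$ subadditivity back in at the exponent $p-1$. Everything else is routine.
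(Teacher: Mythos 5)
Your proof is correct and, at its core, is the same argument as the paper's: both establish the delicate additive bound by bootstrapping the $0<p\le 1$ subadditivity at the shifted exponent $p-1\in(0,1)$ inside a fundamental-theorem-of-calculus representation. The paper writes $|u+v|^p-|v|^p=\int_0^1\tfrac{d}{dq}|v+qu|^p\,dq\le p|u|\int_0^1|v+qu|^{p-1}\,dq$ and then splits $|v+qu|^{p-1}\le |qu|^{p-1}+|v|^{p-1}$, integrating the first piece into $|u|^p$ and the second into $p|u|\,|v|^{p-1}$; your variant, which differentiates $g(a)=(a+b)^p-a^p-b^p$ in $a$ and bounds $g'(a)\le p\,b^{p-1}$, is the identical computation in a different parametrization, preceded by a harmless reduction to nonnegative arguments. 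The only other difference is that you prove the $p\le 1$ case and the $2^{p-1}$ convexity bound from scratch, where the paper cites them from Samorodnitsky--Taqqu; your proofs of both are fine.

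One caveat, which you should make explicit rather than paper over: what your integration actually yields is the cross term $p\,|v|^{p-1}|u|$ (first power of $|u|$), so your closing claim that this ``produces the stated additive correction term'' is not literally true, since the lemma as printed has $p\,|v|^{p-1}|u|^p$. The printed form is in fact false: for $p=3/2$, $v=1$, $u=10^{-2}$ one gets $|u+v|^p\approx 1.0150$ while $|u|^p+|v|^p+p\,|v|^{p-1}|u|^p=1.0025$. This is not a defect of your argument but a typo in the statement: the paper's own proof produces exactly the same term you do, since there $p|u|\int_0^1|qu|^{p-1}\,dq=|u|^p$ and the remaining contribution is $p|u|\,|v|^{p-1}$. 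So both proofs establish the corrected inequality $|u+v|^p\le |u|^p+|v|^p+p\,|v|^{p-1}|u|$; you should simply state the bound you actually proved. (As an aside, the exponent discrepancy matters downstream, since the proof of Claim \ref{lem: sigmaijestimates2} invokes the lemma in its printed form, with $G_\alpha/n^\alpha=u^\alpha$ rather than $u$ in the cross term; but that is the paper's issue, not yours.)
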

\begin{proof} For the first inequality as well as $|u+v|^p \leq 2^{p-1}(|v|^p+|u|^p)$ for $p>1$, we refer to \cite{samorodnitsky1994stable}. It remains to show that for $p\in (1,2]$ $|v+u|^p\leq |v|^p+|u|^p+p |v|^{p-1} |u|^p$. For this we write
\begin{equation*}
\begin{split}
|u+v|^p&=|v|^p+|u+v|^p-|v|^p\\
&\leq  |v|^p+p |u| \int_0^1 |q(u+v)+(1-q)v|^{p-1}\,dq \\
&\leq |v|^p+p |u| \int_0^1 |q u|^{p-1}\,dq+ p |u| |v|^{p-1}
\end{split}
\end{equation*} where the last step is due to the first inequality.
\end{proof}
\vspace{0.1in}
The estimate of $|\cdot|^p$ for $p \in [1,2)$ relies on two inequalities. The first one coincides with the inequality on $p \in (0,1]$, providing sharper estimates. The second inequality becomes exact if and only if either $u$ or $v$ is zero.

\vspace{0.1in}
\begin{lemma}\label{lem: jensen}
Let $\phi$ be a positive homogeneous of degree $p>1$ and convex function, defined on $\mathbb R$. Let real numbers $y_1,\dots,y_m$ and $b_1,\dots,b_m$ non-negative with $\sum_i b_i>0$. Then 
\begin{equation*}
\phi\bigg(\sum_{i=1}^m  b_i y_i\bigg)\leq \bigg(\sum_{i=1}^m b_i\bigg)^{p-1} \sum_{i=1}^m b_i\phi(y_i)
\end{equation*}
\end{lemma}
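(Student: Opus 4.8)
The plan is to reduce the statement to the classical Jensen inequality for convex functions by first normalizing the weights, and then to recover the stated form using the positive homogeneity of $\phi$. Set $B=\sum_{i=1}^m b_i$, which is strictly positive by hypothesis, and observe that $\{b_i/B\}_{i=1}^m$ is a collection of non-negative numbers summing to one, hence the coefficients of a convex combination. Any $b_i$ that vanishes simply drops out of both sides, so no special treatment of degenerate terms is required.

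First I would apply Jensen's inequality to the convex function $\phi$ with these normalized weights, obtaining
\begin{equation*}
\phi\!\left(\sum_{i=1}^m \frac{b_i}{B}\,y_i\right)\leq \sum_{i=1}^m \frac{b_i}{B}\,\phi(y_i).
\end{equation*}
Next I would rewrite the argument on the left as $\tfrac{1}{B}\sum_i b_i y_i$ and invoke positive homogeneity of degree $p$, namely $\phi(\lambda x)=\lambda^p\phi(x)$ for $\lambda\geq 0$, with $\lambda=1/B$. This gives $\phi\big(\tfrac1B\sum_i b_i y_i\big)=B^{-p}\,\phi\big(\sum_i b_i y_i\big)$. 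Substituting and multiplying both sides by $B^{p}$ yields
\begin{equation*}
\phi\!\left(\sum_{i=1}^m b_i y_i\right)\leq B^{\,p-1}\sum_{i=1}^m b_i\,\phi(y_i),
\end{equation*}
which is exactly the claimed inequality once $B$ is expanded as $\sum_i b_i$.

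The argument is short, and the only point requiring care is bookkeeping of the scaling exponent: the factor $B^{-1}$ coming from normalizing inside Jensen combines with the factor $B^{-p}$ produced by homogeneity, and multiplying through by $B^{p}$ leaves precisely the power $B^{p-1}$ appearing in the statement. I do not anticipate a genuine obstacle beyond verifying that positive homogeneity of degree $p$ legitimately licenses the scaling identity for the non-negative factor $1/B$ (consistently forcing $\phi(0)=0$ and avoiding any sign ambiguity), and that Jensen applies verbatim because the normalized weights form a probability vector.
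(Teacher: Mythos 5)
Your proof is correct and is essentially the paper's own argument: both normalize by $B=\sum_i b_i$, use positive homogeneity of degree $p$ to extract the factor $B^{p}$, and apply classical Jensen's inequality to the resulting convex combination. The only difference is cosmetic (you scale the argument by $1/B$ and multiply through at the end, while the paper factors $w=B$ out of the argument first), so nothing further is needed.
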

\begin{proof}
We write
\begin{equation*}
\begin{split}
\phi\bigg(\sum_{i=1}^m  b_i y_i\bigg)&=\phi\bigg(w \cdot \frac{\sum_{i}  b_i y_i}{w}\bigg)=w^{p}\phi\bigg(\frac{\sum_{i}  b_i y_i}{w}\bigg)
\end{split}
\end{equation*} 
where $w=\sum_{i} b_i>0$. The result follows by direct application of Jensen's inequality \cite{royden} on $\phi\big(\frac{\sum_{i}  b_i y_i}{\sum_{i} b_i}\big)$.
\end{proof}

\vspace{0.1in}
\begin{lemma}[Minkowski's Inequality \cite{kolmogorov1975introductory}] \label{lem: minksowski} Let $p\geq 1$ and $f,~g$ real-valued, integrable functions on $E\subset \mathbb R$. Then 
$$\bigg(\int_E |f+g|^p\,ds\bigg)^{\frac{1}{p}}\leq \bigg(\int_E |f|^p\,ds\bigg)^{\frac{1}{p}}+ \bigg(\int_E |g|^p\,ds\bigg)^{\frac{1}{p}}.  $$
\end{lemma}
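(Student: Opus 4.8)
The plan is to reduce the statement to Hölder's inequality via the standard splitting argument, after first disposing of two easy cases. For $p=1$ the claim is immediate: the pointwise triangle inequality $|f+g|\leq|f|+|g|$ together with monotonicity and additivity of the integral gives $\int_E|f+g|\,ds\leq\int_E|f|\,ds+\int_E|g|\,ds$, which is exactly the assertion. I would also set aside the degenerate case $\int_E|f+g|^p\,ds=0$, where the left-hand side vanishes and the inequality holds trivially; this is needed to justify a division later.

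For $p>1$, let $q$ denote the conjugate exponent defined by $\frac{1}{p}+\frac{1}{q}=1$, equivalently $(p-1)q=p$. The key step is to bound the $p$-th power of $|f+g|$ by splitting off one factor and using the pointwise estimate $|f+g|\leq|f|+|g|$ inside the integrand:
$$\int_E|f+g|^p\,ds=\int_E|f+g|\,|f+g|^{p-1}\,ds\leq\int_E|f|\,|f+g|^{p-1}\,ds+\int_E|g|\,|f+g|^{p-1}\,ds.$$
To each of the two resulting integrals I would apply Hölder's inequality with exponents $p$ and $q$; for the first term this yields the bound $\big(\int_E|f|^p\,ds\big)^{1/p}\big(\int_E|f+g|^{(p-1)q}\,ds\big)^{1/q}$, and since $(p-1)q=p$ the second factor equals $\big(\int_E|f+g|^p\,ds\big)^{1/q}$; the second term is handled identically.

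Collecting the two estimates gives
$$\int_E|f+g|^p\,ds\leq\Big[\big(\int_E|f|^p\,ds\big)^{1/p}+\big(\int_E|g|^p\,ds\big)^{1/p}\Big]\big(\int_E|f+g|^p\,ds\big)^{1/q}.$$
I would then divide both sides by $\big(\int_E|f+g|^p\,ds\big)^{1/q}$ — legitimate once the degenerate case has been excluded — and use $1-\frac{1}{q}=\frac{1}{p}$ to recover precisely the claimed inequality. The only genuine obstacle is that the argument presupposes Hölder's inequality; to be fully self-contained one should first establish Hölder from Young's inequality $ab\leq\frac{a^p}{p}+\frac{b^q}{q}$ (a consequence of the convexity of the exponential, in the same spirit as Lemma \ref{lem: jensen}), after which the remaining manipulations are routine.
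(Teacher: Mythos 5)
Your proof is correct, but note that the paper itself does not prove this lemma at all: it is stated as a classical result imported verbatim from the literature (Kolmogorov--Fomin), and is only \emph{used} in the proof of Claim \ref{lem: sigmaijestimates2}. So there is no in-paper argument to compare against; what you have supplied is the standard textbook derivation, and it is sound. The reduction to the cases $p=1$ and $p>1$, the splitting $|f+g|^p=|f+g|\,|f+g|^{p-1}$, the pointwise triangle inequality, the application of H\"older with conjugate exponents satisfying $(p-1)q=p$, and the final division are all exactly right. One small point deserves attention: to divide by $\big(\int_E|f+g|^p\,ds\big)^{1/q}$ you need this quantity to be not only nonzero (which you address) but also \emph{finite}. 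This is easily patched: if either $\int_E|f|^p\,ds$ or $\int_E|g|^p\,ds$ is infinite the claimed inequality is trivial, and otherwise the pointwise bound $|f+g|^p\leq 2^{p-1}\big(|f|^p+|g|^p\big)$ --- which is precisely the second estimate in the paper's Lemma \ref{lem: ineq} --- guarantees $\int_E|f+g|^p\,ds<\infty$, after which your division step is legitimate. With that one line added, the argument is complete and self-contained modulo H\"older's inequality, which, as you note, can itself be obtained from Young's inequality.
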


\begin{proof}[Proof of Claim \ref{lem: sigmaijestimates}] Observe that Assumption \ref{assum0} implies, \begin{equation*}\sum_{k=2}^n q_{ik}q_{jk}= \begin{cases} -\frac{1}{n}, & i\neq j\\
\frac{n-1}{n}, & i=j.
\end{cases}
\end{equation*}Based on this property and elementary algebra we observe that $f_{ij}(t)$ can be re-written as:
\begin{equation*}
f_{ij}(t)=\begin{cases} W_{i,j,n}(t)-\frac{1}{n(n-1)}g(t), & i\neq j \\
W_{i,j,n}(t)+\frac{1}{n}g(t), & i=j
\end{cases}
\end{equation*} where $$W_{i,j,n}(t)=\frac{1}{n-1}\sum_{k=2}^n\sum_{m\neq 1,k}(e^{-\lambda_k t}-e^{-\lambda_m t})q_{ik}q_{jk},$$ and $g(t)=\sum_{k=2}^n e^{-\lambda_k t}$.
We elaborate only for $i\neq j$. Repeated application of Lemma \ref{lem: ineq}, followed by $e^{-x}\geq 1-x$ gives

\begin{equation*}
\begin{split}
&|f_{ij}(t)|^{\alpha}\leq \frac{1}{(n-1)^{\alpha}}\sum_{k=2}^{n}|q_{ik}|^\alpha |q_{jk}|^{\alpha}\times \\
&\bigg[ \sum_{m=2}^{k=1}e^{-\lambda_m t}\big(1-e^{-(\lambda_k-\lambda_m)t}\big) +\\
&\hspace{0.4in}+e^{-\lambda_k t}\sum_{m=k+1}^{n}\big(1-e^{-(\lambda_m-\lambda_k)t}\big) \bigg]^{\alpha}+\frac{g^{\alpha}(t)}{n^{\alpha}(n-1)^{\alpha}}  
\end{split}
\end{equation*} 
and 

\begin{equation*}\begin{split}
|f_{ij}(t)|^{\alpha}&\leq \frac{1}{(n-1)^{\alpha}}\sum_{k=2}^{n}|q_{ik}|^\alpha |q_{jk}|^{\alpha}\times\\&\bigg[\sum_{m=2}^{k-1}e^{-\alpha\lambda_m t}(\lambda_k-\lambda_m)^{\alpha}t^{\alpha}+\\
&+e^{-\alpha\lambda_k t}\sum_{m=k+1}^{n}\big(\lambda_m-\lambda_k\big)^\alpha t^\alpha\bigg]+\frac{g^{\alpha}(t)}{n^{\alpha}(n-1)^{\alpha}}
\end{split}
\end{equation*}

Consequently,

\begin{equation*}
\begin{split}
&\int_0^{\infty}|f_{ij}(s)|^{\alpha}\,ds\leq \frac{1}{(n-1)^{\alpha}}\sum_{k=2}^{n}|q_{ik}|^\alpha |q_{jk}|^{\alpha}\times\\
&\Gamma(\alpha+1)\bigg(\sum_{m=2}^{k-1}\frac{(\lambda_k-\lambda_m)^{\alpha}}{(\alpha \lambda_m)^{\alpha}}+\frac{1}{(a\lambda_k)^{\alpha}}\sum_{m=k+1}^{n}\big(\lambda_m-\lambda_k\big)^\alpha\bigg)\\
&+\frac{\int_0^\infty g^\alpha(s)\,ds}{n^\alpha(n-1)^\alpha}=\frac{1}{(n-1)^\alpha}\bigg[\sum_{k=2}^n |q_{ik}|^\alpha |q_{jk}|^\alpha \Lambda_{\alpha,1}(k)+\frac{G_\alpha}{n^\alpha}\bigg]
\end{split}
\end{equation*}

Following similar steps, for $i=j$, we have 

\begin{equation*}
\begin{split}
&\int_0^{\infty}|f_{ij}(s)|^{\alpha}\,ds\leq\\
&\hspace{0.25in}\leq \frac{1}{(n-1)^\alpha}\bigg[\sum_{k=2}^n |q_{ik}|^\alpha |q_{jk}|^\alpha \lambda_{\alpha,1}(k)+\frac{(n-1)^\alpha}{n^\alpha} G_\alpha\bigg].
\end{split}
\end{equation*}
\end{proof}

\begin{proof}[Proof of Claim \ref{lem: sigmaijestimates2}] For $\alpha \in [1,2]$, we invoke Lemma \ref{lem: minksowski} and use similar techniques to these in proof of Claim \ref{lem: sigmaijestimates} to obtain
\begin{equation*}
\sigma_{ij}\leq \begin{cases} \frac{1}{\alpha^{\frac{1}{\alpha}}(n-1)}\bigg[\sum_{k} |q_{ik}| |q_{jk}|\Lambda_{\alpha,\alpha}(k)+\frac{1}{n}G_{\alpha}^{\frac{1}{\alpha}}\bigg], & i\neq j \\
\frac{1}{\alpha^{\frac{1}{\alpha}}(n-1)}\sum_{k} |q_{ik}|^2\Lambda_{\alpha,\alpha}(k)+\frac{1}{\alpha^{\frac{1}{\alpha}} n}G_{\alpha}^{\frac{1}{\alpha}}, & i=j
\end{cases} 
\end{equation*} 
We need, however, estimates of $\sigma_{ij}^\alpha$. So
\begin{equation*}
\sigma_{ij}^\alpha\leq \begin{cases} \frac{1}{\alpha(n-1)^\alpha}\big[\sum_{k} |q_{ik}| |q_{jk}|\Lambda_{\alpha,\alpha}(k)+\frac{1}{n}G_{\alpha}^{\frac{1}{\alpha}}\big]^\alpha, & i\neq j \\
\frac{1}{\alpha(n-1)^\alpha}\big[\sum_{k} |q_{ik}|^2\Lambda_{\alpha,\alpha}(k)+\frac{(n-1)}{ n}G_{\alpha}^{\frac{1}{\alpha}}\big]^\alpha, & i=j
\end{cases} 
\end{equation*} 
Now, we use the second inequality of Lemma \ref{lem: ineq}, part by part. Fix $i\neq j$. Then, either
\begin{equation*}\begin{split}
\sigma_{ij}^\alpha&\leq \frac{2^{\alpha-1}}{\alpha(n-1)^\alpha}\bigg[\bigg(\sum_k |q_{ik}| |q_{jk}|\Lambda_{\alpha,\alpha}(k) \bigg)^\alpha+\frac{G_\alpha}{n^\alpha}\bigg]\\
& \leq \frac{2^{\alpha-1}}{\alpha(n-1)^\alpha}\bigg[\Lambda_{\alpha,\alpha}^{\alpha-1}\sum_k |q_{ik}|^\alpha |q_{jk}|^\alpha \Lambda_{\alpha,\alpha}(k)+\frac{G_\alpha}{n^\alpha}\bigg],
\end{split}
\end{equation*} where the last step is due to Lemma \ref{lem: jensen}, or
\begin{equation*}\begin{split}
\sigma_{ij}^\alpha&\leq \frac{1}{\alpha(n-1)^\alpha}\bigg[\bigg(\sum_k |q_{ik}| |q_{jk}|\Lambda_{\alpha,\alpha}(k) \bigg)^\alpha+\\
&\hspace{0.7in}+\frac{G_\alpha}{n^\alpha}+\alpha \big(\sum_k |q_{ik}| |q_{jk}|\Lambda_{\alpha,\alpha}(k)\big)^{\alpha-1}\frac{G_\alpha}{n^\alpha} \bigg]\\
& \leq \frac{1}{\alpha(n-1)^\alpha}\bigg[\Lambda_{\alpha,\alpha}^{\alpha-1}\sum_k |q_{ik}|^\alpha |q_{jk}|^\alpha \Lambda_{\alpha,\alpha}(k)\\
&\hspace{1.9in}+\frac{G_\alpha}{n^\alpha}\big(1+\alpha \Lambda_{\alpha,\alpha}^{\alpha-1}\big)\bigg]
\end{split}
\end{equation*} for the last step is due $(\sum_{k}|q_{ik}| |q_{jk}| \Lambda_{\alpha,\alpha}(k))^{\alpha-1}\leq \Lambda_{\alpha,\alpha}^{\alpha-1}$.
Similar steps are taken for $i=j$. 
\end{proof}

%

\ifCLASSOPTIONcaptionsoff
  \newpage
\fi



\bibliographystyle{IEEEtran}
\bibliography{bibliography}

\begin{thebibliography}{10}
\providecommand{\url}[1]{#1}
\csname url@samestyle\endcsname
\providecommand{\newblock}{\relax}
\providecommand{\bibinfo}[2]{#2}
\providecommand{\BIBentrySTDinterwordspacing}{\spaceskip=0pt\relax}
\providecommand{\BIBentryALTinterwordstretchfactor}{4}
\providecommand{\BIBentryALTinterwordspacing}{\spaceskip=\fontdimen2\font plus
\BIBentryALTinterwordstretchfactor\fontdimen3\font minus
  \fontdimen4\font\relax}
\providecommand{\BIBforeignlanguage}[2]{{%
\expandafter\ifx\csname l@#1\endcsname\relax
\typeout{** WARNING: IEEEtran.bst: No hyphenation pattern has been}%
\typeout{** loaded for the language `#1'. Using the pattern for}%
\typeout{** the default language instead.}%
\else
\language=\csname l@#1\endcsname
\fi
#2}}
\providecommand{\BIBdecl}{\relax}
\BIBdecl

\bibitem{craig88}
T.~Craig, ``Air traffic congestion: Problems and prospects,'' in \emph{Cities
  and Their Vital Systems: Infrastructure Past, Present, and Future}, H.~J.
  Ausubel and R.~Herman, Eds.\hskip 1em plus 0.5em minus 0.4em\relax The
  National Academies Press, 1988, ch.~9, pp. 222--232.

\bibitem{poweroutage03}
P.~S. D.~P. Committee, ``Causes of the 2003 major grid blackouts in north
  america and europe, and recommended means to improve system dynamic
  performance,'' \emph{IEEE Transactions on Power Systems}, vol.~20, no.~4, pp.
  1922--1928, 2005.

\bibitem{fouque13}
J.~Fouque and J.~Langsam, \emph{Handbook on Systemic Risk}.\hskip 1em plus
  0.5em minus 0.4em\relax Cambridge, 2013.

\bibitem{6248170}
D.~Como, K.~Savla, D.~Acemoglu, A.~M. Dahleh, and M.~Frazzoli, ``Robust
  distributed routing in dynamical networks; part ii: Strong resilience,
  equilibrium selection and cascaded failures,'' \emph{IEEE Transactions on
  Automatic Control}, vol.~58, no.~2, pp. 333--348, Feb 2013.

\bibitem{Tahbaz13}
D.~Acemoglu, A.~Ozdaglar, and A.~Tahbaz-Salehi, ``Systemic risk and stability
  in financial networks,'' National Bureau of Economic Research, Tech. Rep.
  w18727, 2013.

\bibitem{arnold74}
L.~Arnold, \emph{Stochastic Differential Equations: Theory and
  Applications}.\hskip 1em plus 0.5em minus 0.4em\relax Wiley-Interscience
  publication, 1974.

\bibitem{Bamieh12}
B.~Bamieh, M.~Jovanovi\'c, P.~Mitra, and S.~Patterson, ``Coherence in
  large-scale networks: Dimension-dependent limitations of local feedback,''
  \emph{Automatic Control, IEEE Transactions on}, vol.~57, no.~9, pp. 2235
  --2249, Sept. 2012.

\bibitem{yaserecc16}
Y.~Ghaedsharaf, M.~Siami, C.~Somarakis, and N.~Motee, ``Interplay between
  performance and communication delay in noisy linear dynamical networks,'' in
  \emph{Proc. European Control Conf.}, 2016.

\bibitem{Siami16TACa}
M.~Siami and N.~Motee, ``Fundamental limits and tradeoffs on disturbance
  propagation in linear dynamical networks,'' \emph{IEEE Transactions on
  Automatic Control}, vol.~61, no.~12, pp. 4055--5062, 2016.

\bibitem{somyasnader17}
C.~Somarakis, Y.~Ghaedsharaf, and N.~Motee, ``Aggregate fluctuations in
  time-delay linear consensus networks: A systemic risk perspective,'' in
  \emph{The 2017 American Control Conference, to appear}, 2017.

\bibitem{taleb2010black}
N.~N. Taleb, \emph{The Black Swan: Second Edition: The Impact of the Highly
  Improbable Fragility"}, ser. Incerto.\hskip 1em plus 0.5em minus 0.4em\relax
  Random House Publishing Group, 2010.

\bibitem{mandelbrot2010mis}
\BIBentryALTinterwordspacing
B.~Mandelbrot and R.~Hudson, \emph{The (Mis)Behaviour of Markets: A Fractal
  View of Risk, Ruin and Reward}.\hskip 1em plus 0.5em minus 0.4em\relax
  Profile, 2010. [Online]. Available:
  \url{https://books.google.com/books?id=zg91TAIs6bgC}
\BIBentrySTDinterwordspacing

\bibitem{schoutens2003levy}
W.~Schoutens, \emph{Levy Processes in Finance: Pricing Financial Derivatives},
  ser. Wiley Series in Probability and Statistics.\hskip 1em plus 0.5em minus
  0.4em\relax Wiley, 2003.

\bibitem{shoutens05}
------, \emph{L{\'e}vy Processes in Finance}, ser. Wiley Series in Probability
  and Statistics.\hskip 1em plus 0.5em minus 0.4em\relax Wiley, 2005.

\bibitem{duan15}
J.~Duan, \emph{An Introduction to Stochastic Dynamics}, ser. Cambridge Texts in
  Applied Mathematics.\hskip 1em plus 0.5em minus 0.4em\relax Wiley, 2015.

\bibitem{7084617}
J.~H. Fernández, J.~L. Speyer, and M.~Idan, ``Stochastic estimation for
  two-state linear dynamic systems with additive cauchy noises,'' \emph{IEEE
  Transactions on Automatic Control}, vol.~60, no.~12, pp. 3367--3372, Dec
  2015.

\bibitem{Mesbahi_Egerstedt_2010}
\BIBentryALTinterwordspacing
M.~Mesbahi and M.~Egerstedt, \emph{Graph Theoretic Methods in Multiagent
  Networks}.\hskip 1em plus 0.5em minus 0.4em\relax Princeton University Press,
  2010. [Online]. Available:
  \url{http://books.google.com/books?id=GlrqRwumdDcC&pgis=1}
\BIBentrySTDinterwordspacing

\bibitem{som_acc2018}
C.~Somarakis and N.~Ghaedsharaf, Y.and~Motee, ``Analysis of consensus networkd
  driven by {S}ymmetric-$\alpha$-{S}table motion,'' in \emph{American Control
  Conf.}, 2018.

\bibitem{samorodnitsky1994stable}
G.~Samorodnitsky and M.~Taqqu, \emph{Stable Non-Gaussian Random Processes:
  Stochastic Models with Infinite Variance}, ser. Stochastic Modeling
  Series.\hskip 1em plus 0.5em minus 0.4em\relax Taylor \& Francis, 1994.

\bibitem{sato1983}
K.~Sato and M.~Yamazato, ``Stationary processes of ornstein-uhlenbeck type,''
  in \emph{Probability Theory and Mathematical Statistics}, P.~J. v. and
  K.~It\^{o}, Eds.\hskip 1em plus 0.5em minus 0.4em\relax Lecture Note in
  Mathematics. Vol. 1021, Springer, 1983, pp. 541--551.

\bibitem{CIS-462223}
N.~Shephard and O.~E. Barndorff-Nielsen, ``Integrated ou processes and
  non-gaussian ou-based stochastic volatility models,'' \emph{Scandinavian
  Journal of Statistics}, vol.~30, no.~2, pp. 277--295, 2003.

\bibitem{spielman}
D.~Spielman, ``Algorithms, graph theory, and linear equations in laplacian
  matrices.''\hskip 1em plus 0.5em minus 0.4em\relax Proceedings of the
  International Congress of Mathematicians, 2010.

\bibitem{royden}
H.~L. Royden, \emph{Real Analysis}, 3rd~ed.\hskip 1em plus 0.5em minus
  0.4em\relax McMillan Publishing Company, 1989.

\bibitem{kolmogorov1975introductory}
A.~Kolmogorov and S.~Fomin, \emph{Introductory Real Analysis}, ser. Dover Books
  on Mathematics.\hskip 1em plus 0.5em minus 0.4em\relax Dover Publications,
  1975.

\end{thebibliography}

\end{document}